\long\def\comment#1{}
\newcommand{\caC}{\ensuremath{\mathcal C}}
\newcommand{\caR}{\ensuremath{\mathcal R}}
\newcommand{\caT}{\ensuremath{\mathcal T}}
\newcommand{\caX}{\ensuremath{\mathcal X}}
\newcommand{\pair}[1]{\ensuremath{\langle}{#1}\ensuremath{\rangle}}
\newcommand{\sort}[1]{\ensuremath{\mathsf{#1}}}
\newcommand{\Variables}{\caX}
\newcommand{\Symbols}{\Sigma}
\newcommand{\TermsOn}[5]{{\caT^{#4}_{#1}(#2)}_{#3}^{#5}}
\newcommand{\Terms}{\TermsOn{\Symbols}{\Variables}{}{}{}}
\newcommand{\TermsS}[1]{\TermsOn{\Symbols}{\Variables}{\sort{#1}}{}{}}
\newcommand{\GTermsOn}[2]{\caT^{#2}_{#1}}
\newcommand{\GTerms}{\GTermsOn{\Symbols}{}}
\newcommand{\GTermsS}[1]{\GTermsOn{\Symbols,\sort{#1}}{}}
\newcommand{\SubstOn}[2]{{\cal S}ubst(#1,#2)}
\newcommand{\Substs}{\SubstOn{\Symbols}{\Variables}{}{}{}}
\newcommand{\idsubst}{\textit{id}}
\newcommand{\composeSubst}{}
\newcommand{\composeRel}{;}
\newcommand{\compose}{\composeSubst}
\newcommand{\restrict}[2]{#1|_{#2}}
\newcommand{\congr}[1]{=_{#1}}
\newcommand{\csu}[3]{\textit{CSU}_{#3}({#1})}
\newcommand{\csuV}[3]{\textit{CSU\/}^{#2}_{#3}({#1})}
\newcommand{\var}[1]{\mathit{Var}(#1)}
\newcommand{\occ}[1]{\mathit{Pos}(#1)}
\newcommand{\occSub}[2]{\mathit{Pos}_{#2}(#1)}
\newcommand{\funocc}[1]{\mathit{Pos}_{\Symbols}(#1)}
\newcommand{\subterm}[2]{#1|_{#2}}
\newcommand{\replace}[3]{#1[#3]_{#2}}
\newcommand{\domain}[1]{\mathit{Dom}(#1)}
\newcommand{\range}[1]{\intrvar{#1}}
\newcommand{\intrvar}[1]{\mathit{Ran}(#1)}
\newcommand{\rootpos}{\mathsmaller{\Lambda}}
\newcommand{\unif}[1][]{\mathop{=}}
\newcommand{\rewrite}[1]{\rightarrow_{#1}}
\newcommand{\rewrites}[1]{\rightarrow^*_{#1}}
\newcommand{\Grewrites}[1]{\mathop{\stackrel{?}{\rightarrow}\!\!{}^*_{#1}}}
\newcommand{\norm}[1]{{\downarrow_ {#1}}}
\newcommand{\sem}[1]{{[\![#1]\!]}_{E,B}}
\newenvironment{flemma-noname}[2][]{\vskip\topsep\noindent{\bf
Lemma #2\ifthenelse{\equal{#1}{}}{}{\ }#1.}\em\ }{\vskip\topsep}
\newcommand{\ignore}[1]{}
\newtheorem{thm}{Theorem}
\newtheorem{exa}{Example}
\newtheorem{lem}[thm]{Lemma}
\newtheorem{definition}{Definition}
\providecommand{\keywords}[1]
{
  \small	
  \textbf{\textit{Keywords---}} #1
}
\title{An Efficient Canonical Narrowing Implementation with
Irreducibility and SMT Constraints for Generic Symbolic Protocol Analysis}
\author{
    Ra{\'u}l L{\'o}pez-Rueda \footnote{rloprue@upv.es}
    \and
    Santiago Escobar \footnote{sescobar@upv.es}
    \and
    Julia Sapi\~{n}a \footnote{jsapina@upv.es}
}
\date{VRAIN, Universitat Polit\`ecnica de Val\`encia, Valencia, Spain, Camino de vera, S/N, 46022 Valencia Spain}
\begin{document}

\maketitle

\begin{abstract}
Narrowing and unification are very useful tools for symbolic analysis of rewrite theories, and thus for any model that can be specified in that way. A very clear example of their application is the field of formal cryptographic protocol analysis, which is why narrowing and unification are used in tools such as Maude-NPA, Tamarin and Akiss. In this work we present the implementation of a canonical narrowing algorithm, which improves the standard narrowing algorithm, extended to be able to process rewrite theories with conditional rules. The conditions of the rules will contain SMT constraints, which will be carried throughout the execution of the algorithm to determine if the solutions have associated satisfiable or unsatisfiable constraints, and in the latter case, discard them.
\end{abstract}

\keywords{
    Narrowing, 
    SMT solver, 
    Maude, 
    Security protocols,
    Symbolic analysis
}

\section{Introduction}
\label{sec:introduction}
Verification of protocol security properties modulo the algebraic properties of a protocol's cryptographic functions for an arbitrary number of sessions is generally undecidable, and the state space is infinite. \emph{Symbolic} techniques such as unification and narrowing modulo a protocol's algebraic properties, as well as SMT solving, are particularly well suited to support symbolic model checking and theorem proving verification methods. 

The Maude-NPA \cite{DBLP:conf/fosad/EscobarMM07} is a symbolic model checker for cryptographic protocol analysis based on the above-mentioned symbolic techniques, which are efficiently supported by the underlying Maude language \cite{DBLP:journals/jlap/DuranEEMMRT20}. These Maude-based symbolic techniques are also used by other protocol analysis tools such as Tamarin \cite{DBLP:conf/cav/MeierSCB13} and AKISS \cite{DBLP:journals/tocl/ChadhaCCK16}.

State explosion is a significant challenge in this kind of symbolic model checking analysis modulo algebraic properties, particularly because unification modulo algebraic properties can generate large numbers of unifiers when computing symbolic transitions. Although Maude-NPA has quite effective state space reduction techniques \cite{DBLP:journals/iandc/EscobarMMS14}, further state space reduction gains can be obtained by more sophisticated equational narrowing techniques such as \emph{canonical narrowing}~\cite{DBLP:conf/birthday/EscobarM19}.
One of the contributions of the present
paper is an implementation of canonical narrowing, an algorithm that can achieve further computational and performance improvements. For this we use the new unification and narrowing features supported by the current Maude 3.3, as well as its meta-level features. In addition, the features of Maude also allow us to create an extension of the algorithm in which calls can be made to an SMT solver (e.g., Yices2~\cite{10.1007/978-3-319-08867-9_49}) to check the satisfiability of \emph{SMT constraints}. These restrictions will allow us to express new models on which to apply narrowing.
Throughout this work, we consider several experimental examples in order to demonstrate the effectiveness of the new implementation in Maude 3.3.

A first example of a module in which a great impact is noticed when using canonical narrowing versus standard narrowing consists of 
an equational theory that includes the idempotence property, allowing us to eliminate the redundant elements in a term. The reason why we have chosen this property is because it is highly problematic\footnote{Maude provides an idempotence operator attribute \texttt{idem}, but it cannot be used together with some other operator attributes, because the appropriate matching and normalization algorithms have not been developed yet, see \cite[\S 20.3.1]{maude-manual}.} in automated reasoning. It makes the representation of sets easier, in contrast to multi-sets, and it is useful when dealing with processes or agents: if we have several identical processes working at the same time, the idempotence property allows us to eliminate one of them to avoid redundancy and reduce the use of computational resources.

\begin{exa}\label{exa:idem-vending}
We can specify a vending machine, in which dollars and quarters (of sort \texttt{Money}) are inserted to buy combinations of coffee and apples (of sort \texttt{Item}). To do this, we 
specify that each coffee costs one dollar and each apple three-quarters of a dollar. Two rules handle state transitions for those specifications. 

We add some equations that express idempotence, i.e., two dollars can be replaced by a dollar, two apples can be replaced by one apple, and two coffees can be replaced by one coffee. 
This makes no sense in real life, we acknowledge it, but it is not intended to be realistic (see Section~\ref{sec:experiments} for more realistic examples), just an example to illustrate the impact of idempotence on performance, allowing us to compare standard and canonical narrowing. 

Note the addition of a variable \texttt{M} of type \texttt{Marking} to make the rules and equations strict ACU-coherent, see \cite{DBLP:journals/tcs/Meseguer17}. Also note that the rules are coherent with respect to the equations, see \cite{DBLP:journals/jlap/Meseguer20}.

\clearpage
{\scriptsize
\begin{verbatim}
mod IDEMPOTENCE-VENDING-MACHINE is
   sorts Coin Item Marking Money State . 
   subsort Coin < Money .   subsorts Money Item < Marking .

   op empty : -> Money .
   op __ : Money Money -> Money [assoc comm id: empty] .
   op __ : Marking Marking -> Marking [assoc comm id: empty] .
   op <_> : Marking -> State . 
   ops $ q : -> Coin . ops c a : -> Item .
   var M : Marking .

   rl [buy-c] : < M $ > => < M c > [narrowing] .    
   rl [buy-a] : < M $ > => < M a q > [narrowing] .
   
   eq [idem-dollar] : $ $ M = $ M [variant] .   
   eq [idem-item-a] : a a M = a M [variant] .
   eq [change] : q q q q M = $ M [variant] .    
   eq [idem-item-c] : c c M = c M [variant] .
endm
\end{verbatim}
}

\noindent
Note that idempotence is not specified for quarters (\texttt{q}) because there is already an equation that reduces the repetition of four quarters to a dollar, so that adding idempotence for quarters would 
create a problem of confluence.

If we consider an initial term \texttt{< M1 >} that only contains a variable of type \texttt{Money}, we would obtain several traces by using the narrowing algorithm. In each one of the observed narrowing states, it is necessary to unify with the left-hand side of the rules to determine the new narrowing steps that can be taken. Each of those possible steps creates a new branch in the search tree. One of these traces takes us to the term \texttt{< \$ a c q q M4 >}, which also contains a variable \texttt{M4} of type \texttt{Money}. The narrowing sequence associated to this term is as follows (rule and/or equation renaming may be needed to avoid variable clashes):
$$\texttt{<\;M1\;>} \: \leadsto_{\texttt{buy-a},\sigma_1} \: \texttt{<\;\$\;a\;q\;M2\;>} \: \leadsto_{\texttt{buy-c},\sigma_2} \: \texttt{<\;a\;c\;q\;M3\;>} \: \leadsto_{\texttt{buy-a},\sigma_3} \: \texttt{<\;\$\;a\;c\;q\;q\;M4\;>}$$
where \texttt{M2} and \texttt{M3} are also variables of type \texttt{Money}
and the computed substitutions are
$\sigma_1=\{\texttt{M1}\mapsto\texttt{\$\;M2}\}$,
$\sigma_2=\{\texttt{M2}\mapsto\texttt{\$\;M3}\}$, 
and
$\sigma_3=\{\texttt{M3}\mapsto\texttt{\$\;M4}\}$. 
Note that in the first narrowing step, 
the substitution applied to the left-hand side of the rule
\texttt{buy-a} is 
$\rho_1=\{\texttt{W1}\mapsto\texttt{\$\;M2}\}$.
For the second narrowing step, 
the substitution applied to the left-hand side of the rule
\texttt{buy-c} is 
$\rho_2=\{\texttt{W2}\mapsto\texttt{a\;q\;M3}\}$.
For the third narrowing step, 
the substitution applied to the left-hand side of the rule
\texttt{buy-a} is 
$\rho_3=\{\texttt{W3}\mapsto\texttt{\$\;a\;c\;q\;M4}\}$.
Note that extra \texttt{\$} are introduced by $\rho_1$ and $\rho_3$ due to equational unification using
the variant equations and the axioms.
\end{exa}

As we will see later, the use of canonical narrowing will allow us to introduce \emph{irreducibility constraints} in the algorithm, which in many cases will significantly reduce the number of branches in the narrowing search tree.

It is not hard to notice that if canonical narrowing succeeds in reducing the number of states in the search tree in many cases (reducing computation time), it will most likely have a positive impact on performance when applying narrowing to the symbolic analysis of protocols. This type of protocol analysis allows us to determine whether an attacker can cause a protocol to fail any of its security objectives. But
in many protocols, it is necessary to represent distances, time, or coordinates using real numbers. The formal analysis of this type of protocols can be done using either an explicit model with physical information, or by using an abstract model without physical information, e.g., untimed, and showing it is sound and complete with respect to a model with physical information. The former is more intuitive for the user, but the latter is often chosen because not all cryptographic protocol analysis tools support reasoning about, e.g., time or space. 
SMT solvers allow precisely the use of explicit models with physical information, translating the physical laws into SMT constraints. In order to analyze these models using narrowing algorithms, there is a need to extend them so that they are capable of handling these restrictions. One way to do it is by having narrowing to handle conditional rules, as in \cite{DBLP:journals/jlap/Meseguer20}, in which each of the constraints will be collected at runtime.
In the time and space sequences below, we show one of the protocols that uses physical laws to define time and space (and therefore, be able to function correctly as it is specified).
This protocol
goes beyond existing narrowing approaches such as \cite{DBLP:conf/birthday/EscobarM19,DBLP:journals/jlap/Meseguer20,DBLP:conf/indocrypt/Aparicio-Sanchez20,DBLP:conf/birthday/Aparicio-Sanchez21}, since two cryptographic primitives are combined, exclusive-or over a set of nonces and a commitment scheme, apart of time and location represented as real numbers, requiring both irreducibility and SMT constraints.

\begin{exa}\label{exa:brands-and-chaum}
The Brands-Chaum protocol \cite{BC93} specifies communication  between a verifier V and a prover P.  P needs to authenticate itself to V, and also needs to prove that it is within a distance ``d" of it. We use a couple of messages below, $N_V$ and $N_P \oplus N_V$, as the rapid exchanged messages proving to be within the distance ``d".

Below, we describe a typical interaction between the prover and 
the verifier. We shall use the following notation: $N_A$ denotes a nonce generated by $A$, $S_A$ denotes a secret generated by $A$,
$X {;} Y$ denotes concatenation of two messages $X$ and $Y$,
$\textit{commit}(N,S)$ denotes commitment of  secret $S$ with a nonce $N$,  
$\textit{open}(N,S,C)$ denotes opening a commitment $C$ using the nonce $N$ and checking whether it carries the secret $S$,
$\oplus$ is the exclusive-or operator, and $\textit{sign}(A,M)$ denotes  $A$ signing message $M$. 

\noindent
{\small%
\begin{align*}
P &\rightarrow V :\textit{commit}(N_P,S_P) \\[-1ex]
& \mbox{//The prover sends his name and a commitment}\\
V &\rightarrow P :N_V \\[-1ex]
& \mbox{//The verifier sends a nonce 
and records the time when this message was sent}\\[-1ex]
P &\rightarrow V : N_P \oplus N_V \\[-1ex]
& \mbox{//The verifier checks 
whether the 
message arrives within two times a fixed distance}\\
P &\rightarrow V : S_P \\[-1ex]
& \mbox{//The prover sends the committed secret 
and the verifier
opens the commitment
}\\
P &\rightarrow V : \mathit{sign}_P(N_V ; N_P \oplus N_V) \\[-1ex]
& \mbox{//The prover signs the two rapid exchange messages}
\end{align*}
}
\end{exa}

We assume
the participants are located at an arbitrary given topology 
(participants do not move from their assigned locations)
with distance constraints, 
where travelled time and coordinates are represented by real  numbers.

We assumed coordinates $P_x$, $P_y$, $P_z$ for each participant $P$.

The previous informal Alice\&Bob notation was naturally extended to include time in \cite{DBLP:conf/indocrypt/Aparicio-Sanchez20}
and to include both time and location
in
\cite{DBLP:conf/birthday/Aparicio-Sanchez21}.
First, we add the time when a message was sent or received as a subindex $P_{t_1} \to V_{t_2}$.
Second, 
 
the sending and receiving times of a message differ by the distance between them just by adding some location constraints
$$\lfloor d(A,B) \rfloor :=\  (d(A,B)\geq 0 \wedge d(A,B)^2=(A_x-B_x)^2+(A_y-B_y)^2+(A_z-B_z)^2)$$

Third, the distance bounding constraint of the verifier is represented as an arbitrary distance $d$.
Time and space constraints 
are written using 
quantifier-free formulas
in 
real arithmetic.
For convenience, 
we 
allow both
$2*x = x + x$ and the monus function
$x \dot{-} y =
\textit{if}\ y < x\ \textit{then}\ x - y\ \textit{else}\ 0$
as definitional
extensions.

\begin{exa}\label{exa:brands-and-chaum-2}
(Continued Example~\ref{exa:brands-and-chaum})
In the time and space interaction sequences below, a vertical bar differentiates between the process interactions and the corresponding constraints 
associated to the metric space. 
The following action sequence from \cite{DBLP:conf/birthday/Aparicio-Sanchez21} differs from \cite{DBLP:conf/indocrypt/Aparicio-Sanchez20} only on the terms $\lfloor d(P,V)\rfloor$.

\noindent
{\small
\[
\begin{array}{@{}r@{}l@{\;}l@{}}
P_{t_1} \rightarrow V_{t'_1} &\::\: \textit{commit}(N_P,S_P) 
& \mid t'_1 = t_1 + d(P,V) \wedge \lfloor d(P,V)\rfloor\\
V_{t_2} \rightarrow P_{t'_2} &\::\: N_V & \mid t'_2 = t_2 + d(P,V) \wedge t_2 \geq t'_1 \wedge \lfloor d(P,V)\rfloor\\\
P_{t_3} \rightarrow V_{t'_3} &\::\: N_P \oplus N_V & \mid t'_3 = t_3 + d(P,V) \wedge t_3 \geq t'_2 \wedge \lfloor d(P,V)\rfloor\\\
V & \::\: t'_3\: \dot{-}\: t_2 \leq 2*d\\
P_{t_4} \rightarrow V_{t'_4} &\::\: S_P & \mid t'_4 = t_4 + d(P,V)\wedge t_4 \geq t_3 \wedge \lfloor d(P,V)\rfloor\\\
V & \::\: \textit{open}(N_P,S_P,\textit{commit}(N_P,S_P))\\
P_{t_5} \rightarrow V_{t'_5} &\::\: \mathit{sign}_P(N_V ; N_P \oplus N_V) & \mid t'_5 = t_5 + d(P,V) \wedge t_5 \geq t_4 \wedge \lfloor d(P,V)\rfloor\
\end{array}%
\]%
}

\begin{figure}[t]
\begin{minipage}{.5\linewidth}
\centering
\includegraphics[width=.7\linewidth]{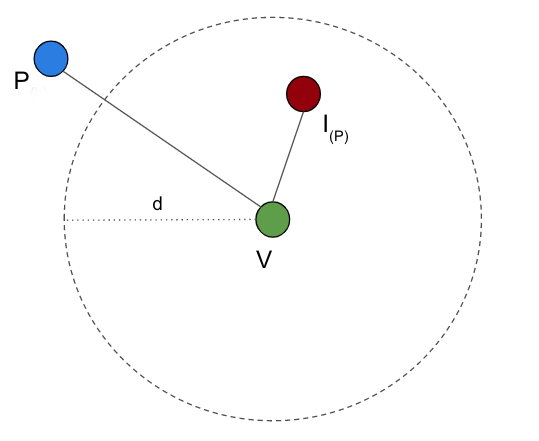}
\caption{Mafia Attack}
\label{fig:mafia}
\end{minipage}
\begin{minipage}{.5\linewidth}
\centering
\includegraphics[width=.6\linewidth]{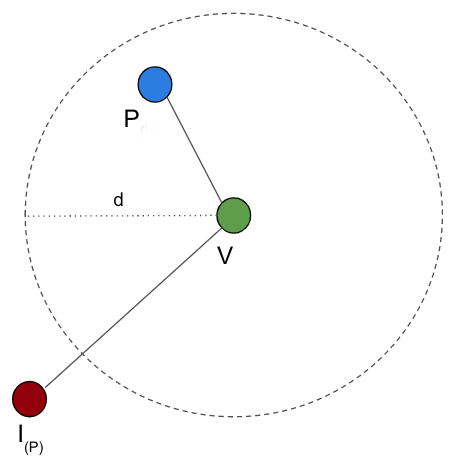}
\caption{Hijacking Attack}
\label{fig:hijacking}
\end{minipage}
\end{figure}

\noindent
The Brands-Chaum protocol is designed to defend against mafia frauds, where an honest prover is outside the neighborhood of the verifier 
(i.e., $d(P,V) > d$)
but an intruder is inside (i.e., $d(I,V) \leq d$), pretending to be the honest prover
as depicted in Figure~\ref{fig:mafia}.
The following is an example of an \emph{attempted} mafia fraud,
in which 
the intruder simply forwards messages back and forth between the prover and the verifier.
We write $I(P)$ to denote an intruder pretending to be an honest prover $P$.\\

\noindent
{\small
\[
\begin{array}{@{}r@{}r@{}l@{}l@{}l@{}}
P_{ t_1} &\rightarrow &I_{t_2} &: \textit{commit}(N_P,S_P) 
& \mid t_2 = t_1 + d(P,I)  \wedge \lfloor d(P,I)\rfloor\\
I(P)_{t_2} &\rightarrow &V_{t_3}&:  \textit{commit}(N_P,S_P) 
& \mid t_3 = t_2 + d(V,I) \wedge \lfloor d(V,I)\rfloor\\
V_{t_3} &\rightarrow &I(P)_{t_4} &: N_V & \mid t_4 = t_3 + d(V,I) \wedge \lfloor d(V,I)\rfloor\\
I_{t_4} &\rightarrow &P_{t_5} &: N_V & \mid t_5 = t_4 + d(P,I) \wedge \lfloor d(P,I)\rfloor\\
P_{t_5} &\rightarrow &I_{t_{6}} &: N_P \oplus N_V & \mid t_{6} = t_5 + d(P,I) \wedge \lfloor d(P,I)\rfloor\\
I(P)_{t_{6}} &\rightarrow &V_{t_{7}} &: N_P \oplus N_V & \mid t_{7} = t_{6} + d(V,I) \wedge \lfloor d(V,I)\rfloor\\
& & V & : t_{7} \dot{-} t_3 \leq 2*d\\
P_{t_{8}} &\rightarrow &I_{t_{9}} &: S_P & \mid t_{9} = t_{8} + d(P,I) \wedge t_8 \geq t_5 \wedge \lfloor d(P,I)\rfloor\\
I(P)_{t_{10}} &\rightarrow &V_{t_{11}} &: S_P & \mid t_{11} = t_{10} + d(V,I)  \wedge t_{11} \geq t_7 \wedge \lfloor d(V,I)\rfloor\\
I(P)_{t_{12}} &\rightarrow &V_{t_{13}} &: sign_P(N_V ; N_P \oplus N_V) & \mid t_{13} = t_{12} + d(V,I)  \wedge t_{13} \geq t_{11}  \wedge \lfloor d(V,I)\rfloor
\end{array}
\]%
}

\noindent
This attack is physically unfeasible, since 
it would require that $2 * d(V,I) + 2* d(P,I) \leq 2*d$, which is unsatisfiable
by $d(V,P) > d > 0$ and the triangular inequality $d(V,P) \leq d(V,I) + d(P,I)$, satisfied in  three-dimensional space.
This attack was already proved unfeasible in \cite{DBLP:conf/indocrypt/Aparicio-Sanchez20} using only  metric space assumptions
and
in \cite{DBLP:conf/birthday/Aparicio-Sanchez21} using a Euclidean space. 

However, a distance hijacking 
attack is possible (i.e., the time and distance constraints are satisfiable),
as depicted in Figure~\ref{fig:hijacking}. 
In this scenario, an intruder located outside the neighborhood of the verifier  (i.e., $d(V,I) > d$) 
succeeds in convincing the verifier that he is inside the neighborhood by 
exploiting the presence of an honest prover in the neighborhood (i.e., $d(V,P) \leq d$) to achieve his goal.
The following is an example of a \emph{successful} distance hijacking, 
in which 
the intruder listens to the exchanged messages between the prover and the verifier but 
sends the last message.

\noindent
{\small
\[
\begin{array}{@{}r@{}l@{\;}l@{}}
P_{t_1} \rightarrow V_{t_2}~~~~~ &\::\: \textit{commit}(N_P,S_P) 
& \mid t_2 = t_1 + d(P,V) \wedge \lfloor d(P,V)\rfloor\\
V_{t_2} \rightarrow P_{t_3},I_{t'_3} &\::\: N_V & \mid t_3 = t_2 + d(P,V)  \wedge \lfloor d(P,V)\rfloor \\
&& \mid t'_3 = t_2 + d(I,V)  \wedge \lfloor d(V,I)\rfloor\\
P_{t_3} \rightarrow V_{t_4},I_{t'_4} &\::\: N_P \oplus N_V & \mid t_4 = t_3 + d(P,V)  \wedge \lfloor d(P,V)\rfloor \\
&& \mid t'_4 = t_3 + d(I,P) \wedge \lfloor d(I,P)\rfloor\\
V~~~~~~~ & \::\: t_4\: \dot{-}\: t_2 \leq 2*d\\
P_{t_5} \rightarrow V_{t_6}~~~~~ &\::\: S_P & \mid t_6 = t_5 + d(P,V) \wedge \lfloor d(P,V)\rfloor\\
&& \mid t_5 \geq t_3 \wedge t_6 \geq t_4 \\
I(P)_{t_7} \rightarrow V_{t_8}~~~~~ &\::\: \mathit{sign}_I(N_V ; N_P \oplus N_V) & \mid t_8 = t_7 + d(I,V) \wedge \lfloor d(I,V)\rfloor\\
&& \mid t_7 \geq t'_4 \wedge t_8 \geq t_6  \\
\end{array}
\]%
}%

\noindent
This attack was proved feasible in \cite{DBLP:conf/indocrypt/Aparicio-Sanchez20} using  metric space assumptions, and it was also possible in  three-dimensional space
in \cite{DBLP:conf/birthday/Aparicio-Sanchez21}.
\end{exa}

Note that to model the above protocol in Maude, it is necessary to use narrowing, in addition to SMT constraints. This is exactly what is proposed in this paper.

\paragraph{Plan of the paper}
The rest of this work is organized as follows. Section \ref{sec:preliminaries} provides some preliminaries on rewriting logic and narrowing. Section \ref{sec:canonical-narrowing} gives a detailed presentation of our new canonical narrowing with irreducibility and SMT constraints. Section \ref{sec:implementation} describes our new implementation of 
that algorithm in Maude 3.3. Section \ref{sec:experiments} presents the experiments, among which can be found the use of standard and canonical narrowing algorithms, both with unconditional modules and conditional modules using SMT constraints.

Finally, Section~\ref{sec:conclusions} summarizes the paper and presents some future work.

\paragraph{Contributions}
Let us clarify the contributions of this paper in detail. This work is an extended version of \cite{DBLP:conf/wrla/Lopez-Rueda22,DBLP:conf/wrla/Lopez-Rueda22-invited}. 
\cite{DBLP:conf/wrla/Lopez-Rueda22} presents a new implementation of the canonical narrowing strategy of \cite{DBLP:conf/birthday/EscobarM19}, which was already an improvement of the contextual narrowing of
\cite{DBLP:conf/esorics/ErbaturEKLLMMNSS12}.
\cite{DBLP:conf/wrla/Lopez-Rueda22-invited} extends the canonical narrowing strategy of \cite{DBLP:conf/wrla/Lopez-Rueda22}
to handle SMT constraints as part of conditional rewrite theories.
Proofs of all the results of \cite{DBLP:conf/wrla/Lopez-Rueda22-invited} are included here.
We have combined the implementations of \cite{DBLP:conf/wrla/Lopez-Rueda22,DBLP:conf/wrla/Lopez-Rueda22-invited}, as well as made changes and improvements to it.
Both standard and canonical narrowing, as well as their extended versions to process SMT constraints of conditional modules in Maude, can be invoked.
We have re-executed all the experiments that appeared in \cite{DBLP:conf/wrla/Lopez-Rueda22}
in order to demonstrate the greater performance. We have also added a new example (Section \ref{subsec:bank-account}) with their respective experiments. This example has never been run with narrowing before, since Maude's built-in narrowing is not capable of processing conditional rewrite theories natively.
The experiments of \cite{DBLP:conf/wrla/Lopez-Rueda22-invited} have been redefined, so that now in Sections \ref{subsec:brands-chaum-time} and \ref{subsec:brands-chaum-time-space} we 
do not just look for protocol vulnerabilities, but we also do so by comparing the performance of standard narrowing and canonical narrowing. These two examples 
are important for this work, since 
they combine (i) SMT constraints on non-linear real arithmetic, 
(ii) the exclusive-or theory,
and
(iii) the commitment theory between the participants. 

In \cite{DBLP:conf/icfem/Lopez-RuedaE22}, we have developed a narrowing algorithm that allows the use of conditional rules with variant-based equalities in contrast to this work that allows conditional rules with only SMT conditions. Their combination is left as future work.

\section{Preliminaries}
\label{sec:preliminaries}
We follow the classical notation and terminology from \cite{Terese03} for term rewriting, and from \cite{Meseguer92,DBLP:journals/jlap/Meseguer20} for rewriting logic and order-sorted notions.

We assume an order-sorted signature $\Sigma$ 
with a poset of sorts $(S, \leq)$. The poset $(\sort{S},\leq)$ of sorts for $\Symbols$
is partitioned into equivalence classes, called \emph{connected components}, by the equivalence relation $(\leq \cup \geq)^+$. We assume that each connected component $[\sort{s}]$  has a \emph{top element} under $\leq$, denoted $\top_{[\sort{s}]}$ and called the \emph{top sort} of $[\sort{s}]$. This involves no real loss of generality, since if $[\sort{s}]$ lacks a top sort, it can be easily added.

We  assume an $\sort{S}$-sorted family $\Variables=\{\Variables_\sort{s}\}_{\sort{s} \in \sort{S}}$ of disjoint variable sets with each $\Variables_\sort{s}$ countably infinite. $\TermsS{s}$ is the set of terms of sort \sort{s}, and $\GTermsS{s}$ is the set of ground terms of sort \sort{s}. We write $\Terms$ and $\GTerms$ for the corresponding order-sorted term algebras. Given a term $t$, $\var{t}$ denotes the set of variables in $t$. The notation $x{:}\sort{s}$ indicates the sort \sort{s} of a variable $x$.

Positions
are represented by sequences of natural numbers denoting an access path in
the term when viewed as a tree.  The top or root position is denoted by the empty sequence $\rootpos$.
We define the relation $p \leq q$ between positions as
$p \leq p $ for any $p$;
and
$p \leq p.q$ for any $p$ and $q$.
Given $U
\subseteq \Symbols\cup\Variables$, $\occSub{t}{U}$ denotes the set of positions of
a term $t$ that are rooted by symbols or variables in $U$.
The set of positions of a term $t$ is written $\occ{t}$,
and
the set of non-variable positions $\funocc{t}$.
The subterm of $t$
at position $p$
is $\subterm{t}{p}$ and $\replace{t}{p}{u}$ is
the term $t$ where $\subterm{t}{p}$ is
replaced by $u$.

A \textit{substitution} $\sigma\in\Substs$ is a sorted mapping from a finite subset of $\Variables$ to $\Terms$. Substitutions are written as $\sigma=\{X_1 \mapsto t_1,\ldots,X_n \mapsto t_n\}$ where the domain of $\sigma$ is $\domain{\sigma}=\{X_1,\ldots,X_n\}$ and the set of variables introduced by terms $t_1,\ldots,t_n$ is written $\range{\sigma}$. The identity substitution is $\idsubst$. Substitutions are homomorphically extended to $\Terms$. The application of substitution $\sigma$ to a term $t$ is denoted by $t\sigma$ or $\sigma(t)$.
The restriction of substitution $\sigma$ to variables $W$ is denoted $\subterm{\sigma}{W}=\{(X \mapsto t)\in\sigma \mid X \in W\}$.

A \textit{$\Symbols$-equation} is an unoriented pair $t = t'$, where $t,t' \in \TermsS{s}$ for some sort $\sort{s}\in\sort{S}$.   Given
$\Symbols$ and a set $E$ of $\Symbols$-equations, order-sorted equational logic induces a congruence relation $\congr{E}$ on terms $t,t' \in \Terms$ (see~\cite{Meseguer97}). Throughout this paper we assume that $\GTermsS{s}\neq\emptyset$ for every sort \sort{s}, because this affords a simpler deduction system (see \cite{DBLP:journals/sigplan/GoguenM81}). We write ${\TermsOn{\Symbols/E}{\Variables}{}{}{}}$ and ${\GTermsOn{\Symbols/E}{}}$ for the corresponding order-sorted term algebras modulo the congruence closure $\congr{E}$, denoting the equivalence class of a term $t\in\Terms$ as $[t]_E \in {\TermsOn{\Symbols/E}{\Variables}{}{}{}}$.

The first-order language of equational $\Symbols$-formulas is defined as: $\Symbols$-equations $t = t'$ as basic atoms, conjunction $\wedge$ of formulas,
disjunction $\vee$ of formulas,
negation $\neg$ of a formula,
universal quantification $\forall$ of a variable $x{:}\sort{s}$ in a formula,
and 
existential quantification $\exists$ of a variable $x{:}\sort{s}$ in a formula.
A formula is quantifier-free (QF) if it does not contain any quantifier.
Given a $\Symbols$-algebra $A$, a formula $\varphi$, and an assignment $\alpha\in X\mapsto A$ for the free variables $X$ in $\varphi$, 
$A,\alpha \models \varphi$ denotes that 
$\varphi\alpha$ is satisfied\footnote{We do not consider specific satisfiability techniques and refer the reader to \cite{10.1007/978-3-319-45641-6_26}.} and $A \models \varphi$ holds if $\forall\alpha : A,\alpha\models\varphi$.

An \emph{equational theory} $(\Symbols,E)$ is a pair with $\Symbols$ an order-sorted signature and $E$ a set of $\Symbols$-equations. An equational theory $(\Symbols,E)$ is \emph{regular} if for each $t = t'$ in $E$, we have $\var{t} = \var{t'}$. An equational theory $(\Symbols,E)$ is \emph{linear} if for each $t = t'$ in $E$, each variable occurs only once in $t$ and in $t'$. An equational theory $(\Symbols,E)$ 
is \textit{sort-preserving} if for each $t = t'$ in $E$, each sort \sort{s}, and each substitution $\sigma$, we have $t \sigma \in \TermsS{s}$ iff $t' \sigma \in \TermsS{s}$. An equational theory $(\Symbols,E)$ is \emph{defined using top sorts} if for each equation $t = t'$ in $E$, all variables in $\var{t}$ and $\var{t'}$ have a top sort.
Given two equational theories $G=(\Symbols,E)$ and $T=(\Symbols_0,\Gamma)$, we say $T$ is the background theory of 
$G$ iff $\Symbols_0 \subseteq \Symbols$
and for each ground $\Symbols_0$-formula $\varphi$, 
$\GTermsOn{\Symbols/E}{} \models \varphi \iff 
T \models \varphi$.

An \textit{$E$-unifier} for a $\Symbols$-equation $t = t'$ is a substitution $\sigma$ such that $t\sigma \congr{E} t'\sigma$.  For $\var{t}\cup\var{t'} \subseteq W$, a set of substitutions $\csuV{t = t'}{W}{E}$ is said to be a \textit{complete} set of unifiers for the 
$\Symbols$-equation $t = t'$ modulo $E$ away from $W$ iff: (i) each $\sigma \in \csuV{t = t'}{W}{E}$ is an $E$-unifier of $t = t'$; (ii) for any $E$-unifier $\rho$ of $t = t'$ there is a substitution $\sigma \in \csuV{t=t'}{W}{E}$ such that $\subterm{\sigma}{W} \sqsupseteq_{E} \subterm{\rho}{W}$ (i.e., there is a substitution $\eta$ such that $\subterm{(\sigma\compose\eta)}{W} \congr{E} \subterm{\rho}{W}$); and (iii) for all $\sigma \in \csuV{t=t'}{W}{E}$, $\domain{\sigma} \subseteq (\var{t}\cup\var{t'})$ and $\range{\sigma} \cap W = \emptyset$.

A \textit{conditional rewrite rule} is an oriented pair  $l \to r \mbox{ if }\varphi$, where $l \not\in \Variables$,
$\varphi$ is a QF $\Symbols_0$-formula, and $l,r \in \TermsS{s}$ for some sort $\sort{s}\in\sort{S}$. 
An unconditional rewrite rule is written $l \to r$.
A \textit{conditional order-sorted rewrite theory} is a 
tuple $(\Symbols,E,R,T)$ with $\Symbols$ an order-sorted signature, $E$ a set of $\Symbols$-equations,
$T$ is the background theory 
$(\Symbols_0, \Gamma)$
of 
$(\Symbols, E)$, and $R$ a set of 
(conditional) rewrite rules.
The set $R$ of rules is \textit{sort-decreasing} if for each $l \rightarrow r$ (or $l \rightarrow r \mbox{ if }\varphi$) in $R$, each $\sort{s} \in \sort{S}$, and each substitution $\sigma$, $r\sigma \in \TermsS{s}$ implies $l\sigma \in \TermsS{s}$. 

The rewriting relation on $\Terms$, written $t \rewrite{R} t'$  or $t \rewrite{p,R} t'$ holds between $t$ and $t'$ iff there exist a position $p \in \funocc{t}$, a rule $l \to r\mbox{ if }\varphi\in R$ and a substitution $\sigma$, such that $T \models \varphi\sigma$, $\subterm{t}{p} = l\sigma$, and $t' = \replace{t}{p}{r\sigma}$. The relation $\rewrite{R/E}$ on $\Terms$ is ${\congr{E} \composeRel\rewrite{R}\composeRel\congr{E}}$. The transitive (resp. transitive and reflexive) closure of $\rewrite{R/E}$ is denoted $\rewrite{R/E}^+$ (resp. $\rewrites{R/E}$).  A term $t$ is called $\rewrite{R/E}$-irreducible (or just $R/E$-irreducible) if there is no term $t'$ such that $t \rewrite{R/E} t'$.  For $\rewrite{R/E}$ confluent and terminating, the irreducible version of a term $t$ is denoted by $t\norm{R/E}$.
A substitution $\sigma$ is in
irreducible (or normalized) form
if
$\forall x: \sigma(x)=\sigma(x)\norm{R/E}$.

A relation $\rewrite{R,E}$ on $\Terms$ is defined as: $t \rewrite{p,R,E} t'$ (or just $t \rewrite{R,E} t'$) iff there exist a  position $p \in \funocc{t}$, a rule $l \to r\mbox{ if }\varphi$ in $R$, and a substitution $\sigma$ such that $T \models \varphi\sigma$, $\subterm{t}{p} \congr{E} l\sigma$ and $t' = \replace{t}{p}{r\sigma}$. Reducibility of $\rewrite{R/E}$ is undecidable in general since $E$-congruence classes can be arbitrarily large. Therefore, $R/E$-rewriting is usually implemen\-ted~\cite{JouannaudK86} by $R,E$-rewriting under some conditions on $R$ and $E$ such as confluence, termination, and coherence.

We call $(\Symbols,B,\vec{E})$  a \emph{decomposition} of an order-sorted equational theory 
\linebreak $(\Symbols,E\cup B)$ if $B$ is regular, linear, sort-preserving, defined using top sorts, and has a finitary and complete unification algorithm,  which implies that $B$-matching is decidable, and the equations $E$ oriented into rewrite rules $\vec{E}$ are \emph{convergent}, i.e., confluent, terminating, and strictly coherent \cite{DBLP:journals/tcs/Meseguer17} modulo $B$, and sort-decreasing. 

Given a decomposition $(\Symbols,B,E)$ of an equational theory, $(t',\theta)$ is an $E,B$-\emph{variant}~\cite{comon-delaune,Escobar-JLAP} (or just a variant) of term $t$ if $t\theta\norm{\vec{E},B} \congr{B} t'$ and $\theta\norm{\vec{E},B} \congr{B} \theta$. 
Given two variants
$(t',\sigma)$ and $(t'',\theta)$ 
of a term $t$,
we say 
$(t'',\theta)$ is more general than
$(t',\sigma)$,
written
$(t'',\theta) \sqsupseteq_{E,B} (t',\sigma)$, 
if there is a substitution $\rho$ such that $t' \congr{B} t''\rho$ and $\restrict{\sigma}{\var{t}} =_{B} \restrict{(\theta\rho)}{\var{t}}$.
A \emph{complete set of $E,B$-variants}~\cite{Escobar-JLAP} (up to renaming) of a term $t$ is a subset, denoted by $\sem{t}$, of the set of all $E,B$-variants of $t$ such that, for each $E,B$-variant $(t',\sigma)$ of $t$, there is an $E,B$-variant $(t'', \theta) \in \sem{t}$ such that 
$(t'',\theta) \sqsupseteq_{E,B} (t',\sigma)$.
A decomposition $(\Symbols,B,E)$  has the \emph{finite variant property} (FVP)~\cite{Escobar-JLAP} (also called a \emph{finite variant decomposition}) iff for each $\Symbols$-term $t$, a complete set $\sem{t}$ of its most general variants is finite.

\begin{definition}[Reachability goal]
    Given an order-sorted rewrite theory \linebreak $(\Symbols,G,R,T)$, a \emph{reachability goal} is defined as a pair $t \Grewrites{R/G} t'$, where $t, t' \in \TermsS{s}$. It is abbreviated as $t \Grewrites{} t'$ when the theory is  clear from the context; $t$ is the \emph{source} of the goal and $t'$ is the \emph{target}. A substitution $\sigma$ is a $R/G$-\emph{solution} of the reachability goal (or just a solution for short) iff 
either $\sigma(t) =_G \sigma(t')$ or    
    there is a sequence $\sigma(t) \rewrite{R/G} u_1 \rewrite{R/G} \cdots \rewrite{R/G} u_{k-1} \rewrite{R/G} \sigma(t')$.

    A set $\Gamma$ of substitutions is said to be a \emph{complete set
    of solutions} of $t \Grewrites{R/G} t'$ iff (i) every substitution $\sigma \in \Gamma$ is a solution of $t \Grewrites{R/G} t'$, and (ii) for any solution $\rho$ of $t \Grewrites{R/G} t'$, there is a substitution $\sigma \in \Gamma$  more general than $\rho$ modulo $G$,   i.e., $\subterm{\sigma}{\var{t}\cup\var{t'}} \sqsupseteq_{G} \restrict{\rho}{\var{t}\cup\var{t'}}$.
\end{definition}
\noindent

This provides a tool-independent semantic framework for 
analysis of rewrite theories, including, e.g., protocols considering algebraic properties. Note that we have
not added to the preliminaries the condition $\var{\varphi}\cup\var{r} \subseteq \var{l}$ for rewrite rules $l\to r \mbox{ if }\varphi\in R$ and thus 
extra variables may appear in righthand sides or conditions of rules which should be properly instantiated by a solution substitution $\sigma$.

If the terms $t$ and $t'$ in a goal $t \Grewrites{R/G} t'$ are ground and rules have no extra variables in their right-hand sides, then  goal solving becomes a standard rewriting reachability problem. However, since we allow terms $t,t'$ with variables, we need a mechanism  more general than standard rewriting to find solutions of reachability goals. \emph{Narrowing} with $R$ modulo $G$ generalizes rewriting by performing {\em unification\/}  at non-variable positions instead of the usual matching modulo $G$. 

Soundness and completeness of narrowing for solving reachability goals are proved in \cite{JouannaudK86,meseguer-thati-hosc06} 
for unconditional rules $R$ modulo an equational theory $G$ 
and
in 
\cite{DBLP:journals/jlap/Meseguer20} 
for conditional rules $R$ modulo an equational theory $G$,
both with the restriction of considering only
order-sorted \emph{topmost} rewrite theories, i.e., rewrite theories where all the rewrite steps happen at the top of the term.

\section{Canonical Narrowing with Irreducibility and SMT Constraints}
\label{sec:canonical-narrowing}

When $(\Sigma,E\cup B)$ has a decomposition as $(\Sigma,B,\vec{E})$, then the initial algebra $\caT_{\Sigma/E\cup B}$
is isomorphic to the canonical term algebra $\caC_{\Sigma/E\cup B}{=}(C_{\Sigma/E\cup B},{\rightarrow_{R/E \cup B}})$,
where $C_{\Sigma/E\cup B}=\{C_{\Sigma/E\cup B,\sort{s}}\}_{\sort{s}\in\sort{S}}$
and
$C_{\Sigma/E\cup B,\sort{s}}=\{[t\norm{\vec{E},B}]_{B} \in T_{\Sigma/B}\mid t\norm{\vec{E},B} \in T_{\Sigma,\sort{s}}\}$
and where for each $f\in\Sigma$,
$f_{\caC_{\Sigma/E\cup B}}([t_1]_B,\ldots,[t_n]_B)=$
	
$[f(t_1,\ldots,t_n)\norm{\vec{E},B}]_B$.

We have an isomorphism of initial algebras
$\caT_{\Sigma/E\cup B} \cong \caC_{\Sigma/E\cup B}$. 
Likewise, we have an isomorphism of free $(\Sigma,E\cup B)$-algebras 
$\caT_{\Sigma/E\cup B}(\caX) \cong \caC_{\Sigma/E\cup B}(\caX)$,
where $\caC_{\Sigma/E\cup B}(\caX)=(C_{\Sigma/E\cup B}(\caX),\rightarrow_{R/E\cup B})$
and
$$C_{\Sigma/E\cup B,\sort{s}}(\caX)=\{[t\norm{\vec{E},B}]_{B} \in T_{\Sigma/B}(\caX)\mid t\norm{\vec{E},B} \in T_{\Sigma}(\caX)_{\sort{s}}\}.$$
The key point of canonical rewriting is that we can simulate rewritings 
\linebreak
$[t]_{E\cup B} \rightarrow_{R/E\cup B} [t']_{E\cup B}$
by corresponding rewritings 
$[t\norm{\vec{E},B}]_{B} \rightarrow_{R/E,B} [t'\norm{\vec{E},B}]_{B}$
and make rewriting decidable when $(\Sigma,B,E)$ is FVP.
Since completeness of narrowing is satisfied for topmost rewrite theories (see \cite{meseguer-thati-hosc06}), 
we assume a sort \texttt{State} such that all rewriting are at the top of terms of that sort.

\begin{definition}[SMT Canonical Rewriting]
Let $\caR=(\Sigma,E\cup B,R,T)$ be a topmost order-sorted rewrite theory such that $(\Sigma,E\cup B)$
has an FVP decomposition $(\Sigma,B,\vec{E})$.
Let 
$\caC^\circ_{\Sigma/E\cup B}(\caX)_{\sort{State}}=\{t\norm{\vec{E},B}\mid t\norm{\vec{E},B}\in T_{\Sigma}(\caX)_{\sort{State}}\}$,
so that $\caC^{\circ}_{\Sigma/E\cup B}(\caX)_{\sort{State}} \subseteq T_{\Sigma}(\caX)_{\sort{State}}$.
We then define the $\rightarrow_{R/E,B}$ canonical rewrite relation with rules $R$ modulo $E\cup B$ as the following
binary relation 
$\rightarrow_{R/E,B} \subseteq \caC^\circ_{\Sigma/E\cup B}(\caX)_{\sort{State}} \times \caC^\circ_{\Sigma/E\cup B}(\caX)_{\sort{State}}$, where
$t \rightarrow_{R/E,B} t'$
iff 
$\exists l\to r \mbox{ if }\varphi\in R$
and
$\exists \theta$ with $\domain{\theta} \subseteq \var{l}\cup\var{r}\cup\var{\varphi}$ and
$\theta=\theta\norm{\vec{E},B}$
such that:
(i)~$T\models\varphi\theta$,
(ii)~$(l\theta)\norm{\vec{E},B} \congr{B} t$,
and
(iii)~$t'\congr{B} (r\theta)\norm{\vec{E},B}$.
\end{definition}

The claim that 
$\rightarrow_{R/E,B}$ exactly captures/bisimulates the $\rightarrow_{R/E \cup B}$ rewrite relation is justified by the following result,
adapted from \cite[Thm. 2]{DBLP:conf/birthday/EscobarM19}.

\begin{thm}
For each $t,t'\in T_\Sigma(\caX)_\sort{State}$,
$t \rightarrow_{R/E\cup B} t'$ 
iff $t\norm{\vec{E},B} \rightarrow_{R/E,B} t'\norm{\vec{E},B}$.
\end{thm}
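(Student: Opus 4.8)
The plan is to prove the biconditional by unfolding both rewrite relations to their definitions and using the confluence/termination/coherence of $\vec{E}$ modulo $B$ to move normalization past the rewrite step. The two directions are handled separately but use the same toolkit: the isomorphism $\caT_{\Sigma/E\cup B}\cong\caC_{\Sigma/E\cup B}$, strict coherence of $\vec E$ modulo $B$, and the fact that $\norm{\vec E,B}$ is a well-defined function on $B$-classes because $(\Sigma,B,\vec E)$ is convergent.

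For the forward direction, suppose $t\rewrite{R/E\cup B}t'$. By definition of $\rewrite{R/E\cup B}={\congr{E\cup B}\composeRel\rewrite{R}\composeRel\congr{E\cup B}}$ and since the theory is topmost, there are a rule $l\to r\mbox{ if }\varphi\in R$ and a substitution $\sigma$ with $T\models\varphi\sigma$, $t\congr{E\cup B}l\sigma$ and $r\sigma\congr{E\cup B}t'$. The key step is to replace $\sigma$ by its normalized form $\theta:=\sigma\norm{\vec E,B}$: since $E$ consists of the equations oriented as $\vec E$, we have $l\sigma\congr{E\cup B}l\theta$ and $r\sigma\congr{E\cup B}r\theta$, and $T\models\varphi\theta$ still holds because $\varphi$ is a $\Sigma_0$-formula and $\GTermsOn{\Sigma/E}{}\models\varphi\sigma\iff\GTermsOn{\Sigma/E}{}\models\varphi\theta$ by the congruence, hence also modulo $T$ as the background theory. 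Now $t\norm{\vec E,B}\congr B (l\theta)\norm{\vec E,B}$ and $t'\norm{\vec E,B}\congr B(r\theta)\norm{\vec E,B}$ follow because $\norm{\vec E,B}$ is a function on $B$-equivalence classes (convergence of $\vec E$ modulo $B$), which is exactly conditions (ii) and (iii) of SMT canonical rewriting; condition (i) is $T\models\varphi\theta$. So $t\norm{\vec E,B}\rewrite{R/E,B}t'\norm{\vec E,B}$.

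For the converse, suppose $t\norm{\vec E,B}\rewrite{R/E,B}t'\norm{\vec E,B}$, witnessed by $l\to r\mbox{ if }\varphi\in R$ and a normalized $\theta$ with $T\models\varphi\theta$, $(l\theta)\norm{\vec E,B}\congr B t\norm{\vec E,B}$, and $t'\norm{\vec E,B}\congr B(r\theta)\norm{\vec E,B}$. Then $l\theta\congr{E\cup B}(l\theta)\norm{\vec E,B}\congr{E\cup B}t\norm{\vec E,B}\congr{E\cup B}t$, so $t\congr{E\cup B}l\theta$; similarly $r\theta\congr{E\cup B}t'$. Since $T\models\varphi\theta$, the (topmost) rewrite step $l\theta\rewrite R r\theta$ gives $t\congr{E\cup B}l\theta\rewrite R r\theta\congr{E\cup B}t'$, i.e., $t\rewrite{R/E\cup B}t'$.

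**The main obstacle** is the interaction of the SMT condition $\varphi$ with normalization: one must be careful that passing from $\sigma$ to the normalized $\theta$ neither loses nor gains solutions of the constraint. This is where the hypothesis that $T$ is the background theory of $(\Sigma,E)$ — so that satisfiability of ground $\Sigma_0$-formulas in $\GTermsOn{\Sigma/E}{}$ agrees with satisfiability in $T$ — does the work, together with the observation that $\sigma$ and $\theta$ agree modulo $E\cup B$ on the variables of $\varphi$. A secondary technical point, inherited from the cited \cite[Thm. 2]{DBLP:conf/birthday/EscobarM19}, is the use of \emph{strict} coherence of $\vec E$ modulo $B$ to guarantee that $(l\theta)\norm{\vec E,B}$ and $(r\theta)\norm{\vec E,B}$ land in $\caC^\circ_{\Sigma/E\cup B}(\caX)_{\sort{State}}$ and that the whole argument can be carried out at the top sort \texttt{State}, which is needed for the relation $\rewrite{R/E,B}$ to be well-typed as stated in the definition. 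Everything else is routine unfolding of definitions.
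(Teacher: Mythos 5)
Your proof is correct and follows essentially the same route as the paper's: normalize the witnessing substitution, use convergence (Church--Rosser modulo $B$) to push $\norm{\vec{E},B}$ through the step for the forward direction, and observe that the converse is just the inclusion $\rightarrow_{R/E,B}\subseteq\rightarrow_{R/E\cup B}$ unfolded. Your explicit remark that $T\models\varphi\theta$ is preserved when passing to the normalized substitution is a point the paper leaves implicit, but it does not change the argument.
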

\begin{proof}
Since $\rightarrow_{R/E,B} \subseteq \rightarrow_{R/E\cup B}$, we only need to prove the ($\Rightarrow$) direction.
If $t \rightarrow_{R/E\cup B} t'$, we have a substitution $\theta$ and a rule $l\to r  \mbox{ if }\varphi\in R$ such that
$t \congr{E\cup B} l\theta$,
$T\models\varphi\theta$,
 and $t' \congr{E\cup B} r\theta$.
But then, of course, we also have 
$t \congr{E\cup B} l(\theta\norm{\vec{E},B})$ 
and 
$t' \congr{E\cup B} r(\theta\norm{\vec{E},B})$. 
By the Church-Rosser Theorem modulo $B$ \cite[Thm. 3]{var-sat-scp}, we then have
$$
\begin{array}{lllllll}
t & \congr{E\cup B} & l(\theta\norm{\vec{E},B}) & \rightarrow_{R} & r(\theta\norm{\vec{E},B}) & \congr{E\cup B} & t' \\
\norm{\vec{E},B} & & \norm{\vec{E},B} & & \norm{\vec{E},B}  & & \norm{\vec{E},B} \\
t\norm{\vec{E},B} & \congr{B} & l(\theta\norm{\vec{E},B})\norm{\vec{E},B} & \rightarrow_{R/E,B} & r(\theta\norm{\vec{E},B})\norm{\vec{E},B} & \congr{B} & t'\norm{\vec{E},B} \\
\end{array}
$$
as desired.
\qed
\end{proof}

A term $t(x_1{:}s_1,\ldots,x_n{:}s_n)$ 
can be viewed as a symbolic, effective method to describe a (typically infinite) set of terms, namely the set
$$\lceil t(x_1{:}s_1,\ldots,x_n{:}s_n)\rceil = \{t(u_1,\ldots,u_n)\mid u_i\in\TermsS{s_i}\}=\{t\theta \mid \theta\in\Substs\}.$$
We think as $t$ as a \emph{pattern}, which symbolically describes all its \emph{instances} (including 
also the non-ground ones).
Since $(\Sigma,B,E)$
is a decomposition of an equational theory $(\Sigma,E\cup B)$, 
we can consider only irreducible instances of $t$
$$\lceil t\rceil_{\vec{E},B} = \{(t\theta)\norm{\vec{E},B} \mid \theta\in\Substs\}$$

However, since we are interested in terms that may satisfy some irreducibility and SMT constraints, we can obtain a more expressive symbolic pattern language
where patterns are \emph{constrained by both irreducibility and SMT constraints}. 
That is, we consider constrained patterns of the form
$\pair{t,\Pi,\varphi}$
where $\Pi$ is a finite set of irreducible terms
and $\varphi$ is a QF $\Symbols$-formula.
Then we can define:

\noindent
\begin{align*}
\lceil \pair{t,(u_1,\ldots,u_k),\varphi} \rceil_{\vec{E},B} =  \{(t\theta)\norm{\vec{E},B}\ \mid&
\ \theta\in\Substs,
T \models \varphi\theta,\\ 
&\ u_1\theta,\ldots,u_k\theta \mbox{ are $E,B$-irreducible} \}.
\end{align*}

The canonical narrowing relation $\leadsto_{R/E,B}$ includes irreducibility constraints only for the left-hand sides of the rules and SMT constraints only from the conditional part of the rules.

\begin{definition}[SMT Canonical  Narrowing]\label{def:canonical-narrowing}
    Given a topmost order-sorted rewrite theory $(\Symbols,E\cup B,R,T)$ such that $(\Sigma,B,\vec{E})$ is a decomposition of $(\Sigma,E\cup B)$, the \emph{canonical narrowing relation with irreducibility and SMT constraints}  holds between $\pair{t,\Pi,\varphi}$ and $\pair{t',\Pi',\varphi'}$, denoted  
    $$\pair{t,\Pi,\varphi} \leadsto_{\alpha,R/E,B} \pair{t',\Pi',\varphi'}$$ iff there exists $l\to r\mbox{ if }\gamma\in R$, which we always assume renamed, so that $\var{\pair{t,\Pi,\varphi}}\cap(\var{r}\cup\var{l}\cup\var{\gamma})=\emptyset$, and a unifier $\alpha\in\csuV{t = l}{W}{E\cup B}$, where $W=\var{\pair{t,\Pi,\varphi}}\cup\var{r}\cup\var{l}\cup\var{\gamma}$, and
    \begin{enumerate}
        \item $\pair{t',\Pi',\varphi'} = \pair{r\alpha, \Pi\alpha \cup \{(l\alpha)\norm{\vec{E},B}\}, \varphi\alpha \wedge \gamma\alpha}$,
        \item $\Pi\alpha \cup \{(l\alpha)\norm{\vec{E},B}\}$ are $E,B$-irreducible, and
        \item $\varphi'$ is satisfiable, i.e., $\exists\alpha'$ s.t. $T \models \varphi'\alpha'$.
    \end{enumerate}
\end{definition}

\noindent

Note that we do not require a narrowing step to compute $\csu{t \unif l}{}{E\cup B}$ anymore, we perform regular equational unification but impose an irreducibility constraint on the  normal form of the instantiated left-hand side, which can be handled in Maude by using asymmetric unification~\cite{DBLP:conf/cade/ErbaturEKLLMMNSS13}, i.e., 
equational unification is done with irreducibility constraints~\cite[\S 14.10]{maude-manual}.

Irreducibility constraints are computed by using the irreducible left-hand side of the rules that are used in the narrowing steps. 
SMT constraints are simply added to the second component and checked for satisfiability.
Note that we assume that satisfiability of 
QF $\Symbols$-formulas is decidable, indeed for a subsignature $\Symbols_0 \subseteq \Symbols$ associated to the background theory $T$. Maude is using (see \cite[\S 16]{maude-manual}) the Yices2 SMT solver \cite{10.1007/978-3-319-08867-9_49} for satisfiability. Additionally, instructions on how to compile Maude with other SMT solvers are available in the compilation guide that comes with the Maude sources\footnote{Available at \url{https://github.com/SRI-CSL/Maude}}. For example, Maude-SE\footnote{Available at \url{https://maude-se.github.io/}.} also supports the use of the Z3 SMT Solver \cite{demoura2008z}.

In this way, each branch of the search tree will carry irreducibility constraints and SMT constraints (if any). 
In each new narrowing step, the list of irreducibility constraints computed previously in that branch must be taken into account, so that if it is necessary to reduce one of the terms appearing in the list to compute a new step, it will be discarded. 
Similarly, the SMT formula carried along the branch must be taken into account, so that if it becomes unsatisfiable after one narrowing step, it will be discarded. 

Therefore, we eliminate redundancy as well as branches of the search tree, which will be less and less wide than the tree resulting from using standard narrowing. In some cases, we will even get infinite search trees to become finite, ensuring termination.

The key completeness property about this relation is the following.

\begin{lem}[Lifting Lemma]
Given $\pair{t,\Pi,\varphi}$, a $E,B$-irreducible substitution $\theta$,
and terms $u,v\in\caC^\circ_{\Sigma/E,B}(\Variables)$ such that
$u=(t\theta)\norm{\vec{E},B}$,
$T \models \varphi\theta$, and $\Pi\theta$ are $E,B$-irreducible
and $u\to_{R/E,B} v$,
there is a canonical narrowing step with irreducibility and SMT constraints
$$\pair{t,\Pi,\varphi} \leadsto_{\alpha,R/E,B} \pair{r\alpha,\Pi\alpha \cup \{(l\alpha)\norm{\vec{E},B}\},\varphi'}$$
and a $E,B$-irreducible substitution $\gamma$ such that the following diagram holds (note that at the bottom part we consider ground instances due to $\lceil\cdot\rceil_{\vec{E},B}$)
$$
\begin{array}{lcl}
~~\pair{t,\Pi,\varphi} &\leadsto_{\alpha,R/E,B} &\pair{r\alpha,\Pi\alpha\cup\{(l\alpha)\norm{\vec{E},B}\},\varphi'}\\
~~~\downarrow_{\theta} & & \downarrow_{\gamma}\\
\lceil \pair{t,\Pi,\varphi}\rceil_{\vec{E},B} & \to_{R/E,B} &\lceil\pair{r\alpha,\Pi\alpha\cup\{(l\alpha)\norm{\vec{E},B}\},\varphi'}\rceil_{\vec{E},B}
\end{array}
$$
\begin{enumerate}
\item[(i)] $\theta =_B \restrict{(\alpha\gamma)}{\var{\pair{t,\Pi,\varphi}}}$,
\item[(ii)] 
$(r\alpha\gamma)\norm{\vec{E},B} =_B v$,
\item[(iii)] $\Pi\alpha\gamma \cup \{((l\alpha)\norm{\vec{E},B})\gamma\}$ are $E,B$-irreducible,
\item[(iv)] $T \models \varphi'\gamma$.
\end{enumerate}
\end{lem}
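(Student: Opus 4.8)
The plan is to follow the classical shape of a narrowing lifting lemma, now carrying along the irreducibility set $\Pi$ and the SMT component $\varphi$. First I would unfold the hypothesis $u \rightarrow_{R/E,B} v$: by the definition of SMT canonical rewriting there are a renamed rule $l \to r\mbox{ if }\phi\in R$ and an $E,B$-irreducible substitution $\delta$ with $\domain{\delta}\subseteq\var{l}\cup\var{r}\cup\var{\phi}$ such that $T\models\phi\delta$, $(l\delta)\norm{\vec{E},B}\congr{B} u$ and $v\congr{B}(r\delta)\norm{\vec{E},B}$. Because the rule is renamed away from $\pair{t,\Pi,\varphi}$ (and $\theta$ may be restricted to $\var{\pair{t,\Pi,\varphi}}$, $\delta$ to the rule variables), the substitution $\rho:=\theta\uplus\delta$ is well defined and agrees with $\theta$ on $\var{\pair{t,\Pi,\varphi}}$ and with $\delta$ on the rule variables. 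From $u=(t\theta)\norm{\vec{E},B}=(t\rho)\norm{\vec{E},B}$ and $(l\rho)\norm{\vec{E},B}=(l\delta)\norm{\vec{E},B}\congr{B} u$, the Church-Rosser property modulo $B$ (\cite[Thm.~3]{var-sat-scp}) gives $t\rho\congr{E\cup B} l\rho$, i.e., $\rho$ is an $E\cup B$-unifier of $t = l$.

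Next I would invoke completeness of $\csuV{t = l}{W}{E\cup B}$ with $W=\var{\pair{t,\Pi,\varphi}}\cup\var{l}\cup\var{r}\cup\var{\phi}$ (which contains $\domain{\rho}$): there are $\alpha\in\csuV{t = l}{W}{E\cup B}$ and a substitution $\eta$ with $\restrict{(\alpha\eta)}{W}\congr{E\cup B}\restrict{\rho}{W}$. I would take $\alpha$ already $E,B$-normalized and set $\gamma:=\eta\norm{\vec{E},B}$ (normalizing $\eta$ is harmless: $\congr{E\cup B}$ is a congruence and each variable is sent to an $E\cup B$-equivalent term before and after normalization, so $\restrict{(\alpha\gamma)}{W}\congr{E\cup B}\restrict{\rho}{W}$ still holds). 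The candidate narrowing step is the one determined by this rule and this $\alpha$, producing $\pair{r\alpha,\Pi\alpha\cup\{(l\alpha)\norm{\vec{E},B}\},\varphi\alpha\wedge\phi\alpha}$, so $\varphi'=\varphi\alpha\wedge\phi\alpha$. To know this is a legal canonical narrowing step (Definition~\ref{def:canonical-narrowing}) I still owe its side conditions: $\Pi\alpha\cup\{(l\alpha)\norm{\vec{E},B}\}$ $E,B$-irreducible and $\varphi'$ satisfiable; the second set element is a normal form, hence irreducible, and the rest will fall out of (i)--(iv), so I would establish those first.

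The heart is property (i). From $\restrict{(\alpha\gamma)}{\var{\pair{t,\Pi,\varphi}}}\congr{E\cup B}\restrict{\rho}{\var{\pair{t,\Pi,\varphi}}}=\theta$ and the fact that $\theta$ is $E,B$-irreducible, I would sharpen $\congr{E\cup B}$ to $\congr{B}$: normalizing both sides and using convergence of $\vec{E}$ modulo $B$ gives $(\restrict{(\alpha\gamma)}{\var{\pair{t,\Pi,\varphi}}})\norm{\vec{E},B}\congr{B}\theta$, and the point is that the left-hand side is already in normal form---here one uses that $\alpha$ is $E,B$-normalized, $\gamma$ is $E,B$-irreducible, and, decisively, that $\alpha$ is a unifier computed under the FVP hypothesis (the variant/asymmetric-unification machinery underlying $\csuV{t=l}{W}{E\cup B}$), which is what prevents $\alpha\gamma$ from creating new $\vec{E}$-redexes on $\var{\pair{t,\Pi,\varphi}}$. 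Granting (i) as a $\congr{B}$-equality, (ii)--(iv) are routine transport of equalities through $\norm{\vec{E},B}$ and through the background theory: (ii) because $r\rho=r\delta$ and $r\rho\congr{E\cup B} r\alpha\gamma$, so $(r\alpha\gamma)\norm{\vec{E},B}\congr{B}(r\delta)\norm{\vec{E},B}\congr{B} v$ by confluence; (iii) because $\Pi\theta\congr{B}\Pi\alpha\gamma$ by (i) and $\congr{B}$ preserves $E,B$-irreducibility ($B$ being regular and linear), with $((l\alpha)\norm{\vec{E},B})\gamma$ treated by the same variant argument using that $(l\delta)\norm{\vec{E},B}\congr{B} u$ is irreducible; (iv) because $\varphi\alpha\gamma\congr{E\cup B}\varphi\theta$ and $\phi\alpha\gamma\congr{E\cup B}\phi\delta$, and, $T$ being the background theory of $(\Sigma,E\cup B)$, validity in $T$ is invariant under $\congr{E\cup B}$, so $T\models\varphi\theta$ and $T\models\phi\delta$ give $T\models\varphi'\gamma$. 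Finally the two side conditions of Definition~\ref{def:canonical-narrowing} come for free: (iii) says $\Pi\alpha\gamma$ is irreducible, and since it is an instance of $\Pi\alpha$ this forces $\Pi\alpha$ irreducible; and (iv) exhibits $\gamma$ as a witness that $\varphi'$ is satisfiable.

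The main obstacle is precisely that sharpening inside (i): moving from the generic ``$\alpha$ is more general than $\rho$ modulo $E\cup B$'' to the modulo-$B$ identity $\theta\congr{B}\restrict{(\alpha\gamma)}{\var{\pair{t,\Pi,\varphi}}}$ on which the rest of the proof---and conditions (2)--(3) of the definition---relies. Once that is in place everything else is bookkeeping; the genuine content is that under FVP the unifiers in $\csuV{t = l}{W}{E\cup B}$ compose well with $E,B$-irreducible substitutions, the same phenomenon that makes canonical narrowing complete.
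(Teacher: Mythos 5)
Your proposal follows essentially the same route as the paper's proof: unfold the canonical rewriting step into a rule plus an $E,B$-irreducible matching substitution, form the disjoint union with $\theta$ to obtain an $E\cup B$-unifier of $t=l$, invoke completeness of $\csuV{t = l}{W}{E\cup B}$ to factor it as $\alpha$ composed with an irreducible substitution modulo $B$, and then read off (i)--(iv). The one point you single out as the crux --- sharpening the generic $\congr{E\cup B}$ instance relation of the CSU to the modulo-$B$ factorization with an irreducible instantiating substitution, via the FVP/variant machinery --- is exactly the step the paper asserts in a single line ($\theta\uplus\beta =_B \restrict{(\alpha\rho)}{W}$ with $\rho$ irreducible), so your treatment is, if anything, more explicit about where the real content lies.
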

\begin{proof}
The rewriting step $u \to_{R/E,B} v$ exactly means that there are 
$l\to r \mbox{ if }\gamma\in R$ and
a $E,B$-irreducible substitution $\beta$ with 
$\domain{\beta}=\var{l}\cup\var{r}\cup\var{\gamma}$ 
such that 
$u =_B (l\beta)\norm{\vec{E},B}$,
$T\models\gamma\beta$, and $v=_B(r\beta)\norm{\vec{E},B}$.
But, since $\var{\pair{t,\Pi,\varphi}}\cap(\var{l}\cup\var{r}\cup\var{\gamma})=\emptyset$
by the rule renaming assumption, this exactly means that
$\theta\uplus \beta$ is a $E\cup B$-unifier of the unification problem $t = l$.
Therefore, there is an 
$E\cup B$-unifier $\alpha\in CSU_{E\cup B}^W (t = l)$
with
$W=\var{\pair{t,\Pi,\varphi}}\cup\var{r}\cup\var{l}\cup\var{\gamma}$
and a $E,B$-irreducible substitution $\rho$
such that
$$\theta\uplus\beta =_B \restrict{(\alpha\rho)}{W}$$
which gives (i).
But 
then
$$((t\alpha)\norm{\vec{E},B} = (l\alpha)\norm{\vec{E},B},\alpha) 
\sqsupseteq_{E,B}
((t\theta)\norm{\vec{E},B} = (l\beta)\norm{\vec{E},B}, \restrict{(\theta\uplus\beta)}{\var{t = l}})
$$
and this means that 
$(\Pi\alpha\cup\{(l\alpha)\norm{\vec{E},B}\}\gamma) =_B \Pi\theta\cup\{(l\alpha)\norm{\vec{E},B}\gamma\} =_B \Pi\theta\cup\{(l\beta)\norm{\vec{E},B}\}$ which proves (iii)
by the assumption that $\Pi\theta$ is $E,B$-irreducible.
Also, $\gamma$ is indeed satisfiable with $\beta$ which gives (iv).
Finally we also get $(r\alpha\gamma)\norm{\vec{E},B} =_B (r\beta)\norm{\vec{E},B} =_B v$,
proving (ii).
\qed
\end{proof}

Note that this shows that $v\in\lceil \pair{r\alpha,\Pi\alpha \cup \{(l\alpha)\norm{\vec{E},B}\},\varphi'}\rceil_{\vec{E},B}$.
Soundness is trivially proved.
\begin{lem}[Soundness]
Given 
an SMT canonical narrowing step 
$$\pair{t,\Pi,\varphi} \leadsto_{\alpha,R/E,B} \pair{r\alpha,\Pi\alpha \cup \{(l\alpha)\norm{\vec{E},B}\},\varphi'}$$
and a $E,B$-irreducible substitution $\gamma$ 
such that
$T \models \varphi'\gamma$
and
$\Pi\alpha\gamma\allowbreak \cup\allowbreak \{(l\alpha\gamma)\norm{\vec{E},B}\}$ are $E,B$-irreducible,
there is
an SMT canonical rewriting step 
$t\alpha\gamma \to_{R/E,B} (r\alpha\gamma)\norm{\vec{E},B}$.
\end{lem}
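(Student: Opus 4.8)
The plan is to reduce the statement to the theorem above (the one asserting that $t \to_{R/E\cup B} t'$ iff $t\norm{\vec{E},B} \to_{R/E,B} t'\norm{\vec{E},B}$), by noticing that an \emph{instantiated} canonical narrowing step is literally an ordinary $R/E\cup B$-rewrite step. Unfolding Definition~\ref{def:canonical-narrowing}, the given narrowing step is witnessed by a (renamed) rule $l\to r\mbox{ if }\psi$ in $R$ and a unifier $\alpha\in\csuV{t = l}{W}{E\cup B}$, with $\varphi' = \varphi\alpha\wedge\psi\alpha$ --- here I write $\psi$ for the rule's condition, to avoid clashing with the irreducible substitution $\gamma$ of the statement. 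From the hypothesis $T\models\varphi'\gamma$ I project onto the second conjunct to obtain $T\models\psi(\alpha\gamma)$.

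Next I exhibit the ordinary rewrite step $t\alpha\gamma \to_{R/E\cup B} r\alpha\gamma$. Since $\alpha$ is an $E\cup B$-unifier of $t = l$ we have $t\alpha \congr{E\cup B} l\alpha$, hence $t\alpha\gamma \congr{E\cup B} l(\alpha\gamma)$; as the theory is topmost the step takes place at the root, which is a non-variable position of $l(\alpha\gamma)$ because $l\notin\Variables$; and $T\models\psi(\alpha\gamma)$ was just established. By the definition of conditional rewriting modulo $E\cup B$, applying $l\to r\mbox{ if }\psi$ with substitution $\alpha\gamma$ thus gives $l(\alpha\gamma) \to_R r(\alpha\gamma)$, and composing with $\congr{E\cup B}$ on both sides yields $t\alpha\gamma \to_{R/E\cup B} r\alpha\gamma$.

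Finally I apply the theorem above with source $t\alpha\gamma$ and target $r\alpha\gamma$ (both of sort $\sort{State}$, since the theory is topmost), obtaining $(t\alpha\gamma)\norm{\vec{E},B} \to_{R/E,B} (r\alpha\gamma)\norm{\vec{E},B}$; this is exactly the conclusion once the source $t\alpha\gamma$ in the lemma is read as its $\vec{E},B$-normal form, which is the only sensible reading given that $\to_{R/E,B}$ is a relation on $\caC^\circ_{\Sigma/E\cup B}(\caX)_{\sort{State}}$ (and which coincides with $t\alpha\gamma$ itself whenever $t\alpha\gamma$ is already $E,B$-irreducible). I expect essentially no obstacle --- which is why the authors call soundness trivial: the irreducibility hypotheses on $\gamma$ and on $\Pi\alpha\gamma\cup\{(l\alpha\gamma)\norm{\vec{E},B}\}$ play no role in this one-step claim (they are what the companion Lifting Lemma consumes and what makes steps composable), and the only points that need a word of care are the topmost/root-position remark and the normal-form reading of the source. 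A more self-contained alternative is to unfold the definition of SMT canonical rewriting directly, using the witness $\theta := \restrict{((\alpha\gamma)\norm{\vec{E},B})}{\var{l}\cup\var{r}\cup\var{\psi}}$ and checking clauses (i)--(iii) with the same $E\cup B$-congruence reasoning plus the Church--Rosser Theorem modulo $B$; that route is slightly longer, since clause~(i) then requires the background-theory hypothesis on $T$ in order to pass from $T\models\psi(\alpha\gamma)$ to $T\models\psi\theta$.
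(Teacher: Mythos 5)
Your argument is correct. Note, however, that the paper itself supplies no proof of this lemma at all---it merely remarks that ``soundness is trivially proved''---so there is no official argument to compare against; your reduction to the earlier bisimulation theorem ($t \rightarrow_{R/E\cup B} t'$ iff $t\norm{\vec{E},B} \rightarrow_{R/E,B} t'\norm{\vec{E},B}$) is a perfectly legitimate way to discharge it, and the two observations you single out are exactly the ones that need saying: (1) the statement's source term $t\alpha\gamma$ must be read as $(t\alpha\gamma)\norm{\vec{E},B}$ (or assumed already irreducible), since $\to_{R/E,B}$ is by definition a relation on $\caC^\circ_{\Sigma/E\cup B}(\caX)_{\sort{State}}$; and (2) the irreducibility hypotheses on $\gamma$ and on $\Pi\alpha\gamma\cup\{(l\alpha\gamma)\norm{\vec{E},B}\}$ are genuinely unused here, being relevant only to completeness and to composing steps. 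Your closing remark about the alternative direct route is also accurate: instantiating the definition of SMT canonical rewriting with $\theta:=\restrict{((\alpha\gamma)\norm{\vec{E},B})}{\var{l}\cup\var{r}\cup\var{\psi}}$ requires passing from $T\models\psi(\alpha\gamma)$ to $T\models\psi\theta$, which uses the invariance of $T$-satisfaction under $\congr{E\cup B}$ guaranteed by the background-theory assumption; flagging that dependency is a point in your favour rather than a gap.
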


To further illustrate the operation of irreducibility constraints, an example is presented below in which a case of unification without irreducibility restrictions is shown, and another with the use of these, to see the differences in the solutions.

\begin{exa}\label{exa:idem-vending-variants}
If we look at the module of Example \ref{exa:idem-vending}, we can define an equational unification problem of the form $t \unif t'$. 
Specifically, if we consider the narrowing trace shown in that example
$$\texttt{< M1 >} \: \leadsto_{\sigma_1} \: \texttt{< \$\;a\;q\;M2 >} \: \leadsto_{\sigma_2} \: \texttt{< a\;c\;q\;M3 >} \: \leadsto_{\sigma_3} \: \texttt{< \$\;a\;c\;q\;q\;M4 >}$$
we can place ourselves in the third term, just before taking the last step. To compute the next possible steps from that term, it is necessary to try to unify it with the left-hand side of each of the defined rules. In this case, we will focus on the rule \texttt{buy-a}, which is also used to take the first step of the trace. The specification of the unification problem would then be $t=\texttt{< a\;c\;q\;M3 >}$ and $t'=\texttt{< W3 \$ >}$, where \texttt{W3} is a variable of type \texttt{Marking} (money, items, or combinations of them)
corresponding to the variable of a renamed version of rule \texttt{buy-a}. If we run the unification problem 
without any irreducible constraint 
using Maude's command, we will get $5$ unifiers as a solution:

{\scriptsize
\begin{verbatim}
Maude> variant unify < a c q M3:Money > =? < W3:Marking $ > .

            Unifier #1                                  Unifier #2                        
            M3:Money --> $ %1:Money                     M3:Money --> q q q #1:Money       
            W3:Marking --> q c a %1:Money               W3:Marking --> c a #1:Money       

            Unifier #3                                  Unifier #4                         
            M3:Money --> $ #1:Money                     M3:Money --> $ q q q %1:Money      
            W3:Marking --> $ q c a #1:Money             W3:Marking --> c a %1:Money        

                                Unifier #5
                                M3:Money --> q q q %1:Money
                                W3:Marking --> $ c a %1:Money

\end{verbatim}
}

\noindent
Note that $\rho_3$ of Example~\ref{exa:idem-vending} corresponds to the third unifier. But of those 5 unifiers, there are 3 that could be ignored, since the accumulated substitution makes the left-hand side of the \texttt{buy-a} rule used at the first narrowing step reducible.
Canonical narrowing would have computed irreducibility constraints that come from normalizing the instantiated left-hand side of the rules when taking the first and second step. 
That is, the terms
\texttt{<~M3\;\$~>}
(i.e., 
$\texttt{< W1 \$ >}\rho_1\norm{\vec{E},B}=\texttt{< \$\;\$\;M2 >}\norm{\vec{E},B}=\texttt{< \$\;M2 >}$
and 
$\texttt{<~\$\;M2~>}\sigma_2\norm{\vec{E},B}=\texttt{< M3\;\$ >}$) and \texttt{<~a\;q\;M3~>} 
(i.e., 
$\texttt{< W2 \$ >}\rho_2\norm{\vec{E},B}=\texttt{< a\;q\;M3 >}$)
are assumed to be irreducible when we want to take the last step of the trace. 
Maude's unification command allows us to indicate this irreducibility constraint using ``\texttt{such that M3 \$ irreducible}" right after the command (see \cite[\S 14.10]{maude-manual}). If we run it now, we can see how the number of unifiers found is reduced to 2, 
since the first, third and fourth unifiers from the previous command are discarded:

{\scriptsize
\begin{verbatim}
Maude> variant unify < a c q M3:Money > =? < W3:Marking $ > 
>      such that M3:Money $ irreducible .
            Unifier #1                                  Unifier #2
            M3:Money --> q q q #1:Money                 M3:Money --> q q q %1:Money
            W3:Marking --> c a #1:Money                 W3:Marking --> $ c a %1:Money
\end{verbatim}
}

\noindent
As can be seen, the use of irreducibility constraints manages to reduce the number of unifiers. By applying them to the narrowing algorithm, as canonical narrowing does, then this implies the reduction of possible steps (branches in the search tree) from the
current term, since for each one of the unifiers found between the term and the right part of a rule, we will have a new narrowing step.
\end{exa}

Additionally, to further illustrate the operation of SMT constraints, we also show an example using the built-in Maude command that invokes the Yices2 SMT solver.

\begin{exa}\label{exa:smt-maude}
We can consider two simple SMT problems to see how the check
command works. This command is used to check the satisfiability of SMT formulas in Maude, which calls an external SMT solver (Yices2) to determine it. 
It will be necessary to import the \texttt{REAL-INTEGER} module, which is a Maude module designed exclusively for handling SMT constraints, defining real and integer numbers different from those usually used in the tool. In addition, we define several variables that will be used to run the examples:

{\scriptsize
\begin{verbatim}
load smt .
fmod SMT-EXAMPLE is
    protecting REAL-INTEGER .
    vars I1 I2 I3 I4 : Integer .
    vars R1 R2 R3 R4 : Real .
endfm
\end{verbatim}
} 

\noindent
Once this is done, we load the module, and we can run our examples.
Below we define two boolean expressions, using integer and real. As we shall see, the former is satisfiable while the latter is not.

{\scriptsize
$$\begin{array}{c}
((I1\;=\;I2)\;\wedge\;(I2\;>\;I3)\;\wedge\;(I1\;\leq\;I3))\;\vee\;(I3\;\ne\;I4)
\\\\
((R1 = R2)\;\vee\;(R2 = R3))\;\wedge\;(R2\;<\;R3)\;\wedge\;(R1\;\leq R4)\;\wedge\;(R2\;>\;R4)
\end{array}$$
}

\noindent
By executing both examples with the \texttt{check} command we can see how, in fact, the results are as expected~\cite[\S 16]{maude-manual}:

{\scriptsize
\begin{verbatim}
Maude> check in SMT-EXAMPLE : 
     > (I1 === I2 and I2 > I3 and I1 <= I3) or (I3 =/== I4) .
Result from sat solver is: sat

Maude> check in SMT-EXAMPLE : 
     > (R1 === R2 or R2 === R3) and (R2 < R3) and (R1 <= R4) and (R2 > R4) .
Result from sat solver is: unsat
\end{verbatim}
} 
\end{exa}

As we will see later, each of the commands included in Maude is also found in the meta-level version, allowing the user to better control their use and the results obtained. We rely on those meta-level features to make it easier to handle both irreducibility constraints and SMT constraints. We
show in Example \ref{exa:metaCheck} below the meta-level version of the commands above.

\section{Implementation}
\label{sec:implementation}
The implementation of SMT canonical narrowing allows the user to choose between standard narrowing (without irreducibility constraints but with SMT constraints) or canonical narrowing
(both with irreducibility and SMT constraints).
Let us discuss the significance of the implementation. 
In Section \ref{subsec:using-meta}, 
we show how we have improved the performance of the narrowing algorithm
by using the latest meta-level versions of both equational variant unification and one narrowing step available in Maude, instead of implementing them from scratch. 
In Section \ref{subsec:data-estruc-narrow}, we describe the set of nodes used to represent the search tree, where each node includes its unique identifier as well as the identifier of its parent.
In Section \ref{subsec:search-solutions}, we show that when trying to unify the target term with every node in the search tree, the irreducibility and SMT constraints must be preserved.
In Section \ref{subsec:avoid-varclash}, we justify the use of the \texttt{\$} symbol for new variables to avoid variable clashes.
In Section \ref{subsec:algorithm-performance}, we describe the three main parts of the global narrowing algorithm and the extension of the previous data structures to gain performance.
In Section \ref{subsec:conditional-rules}, we show how we transform the SMT conditional rules into unconditional ones. 
In Section \ref{subsec:extending-algorithm}, we show how we check for satisfiability of SMT constraints in different parts of the algorithm.
In Section \ref{subsec:variable-consistency}, we describe how we keep variable consistency throughout the global narrowing algorithm. 

\begin{algorithm}[H]
\footnotesize
\caption{SMT canonical narrowing algorithm}\label{alg:smt-can-narrowing}
\begin{algorithmic}
\State \textbf{Input}: Initial Term ($\textit{ITerm}$)
\State \textbf{Input}: Target Term ($\textit{TTerm}$)
\State \textbf{Input}: Initial SMT restriction ($\textit{SMT}_0$)
\State \textbf{Input}: Initial irreducibility constraints ($\textit{Irred}_0$)
\State \textbf{Input}: Maximum depth ($\textit{MaxDepth} > 0$)
\State \textbf{Input}: Maximum solutions ($\textit{MaxSol} > 0$)
\If{($\textit{checkSAT}(\textit{SMT}_0) = \textit{true}$)}
\State $\textit{Nodes} \gets \{(\textit{ITerm},\textit{Irred}_0,\textit{SMT}_0,0)\}$
\Else
\State $\textit{Nodes} \gets \emptyset$
\EndIf
\State $\textit{Solutions} \gets \emptyset$
\While{($\textit{Nodes}\neq \emptyset\ \&\ \textit{count}(\textit{Solutions}) < \textit{MaxSol})$}
\State $\textit{Nodes} \gets \textit{Nodes} \setminus \{(\textit{Term}_i,\textit{Irred}_i,\textit{SMT}_i,\textit{Depth}_i)\}$
\If{($\textit{checkSAT}(\textit{SMT}_i) = \textit{true}$)}
    \State $\textit{Unifiers}$ $\gets$ $unify(\textit{Term}_i,\textit{TTerm},\textit{Irred}_i)$
    \State $\textit{Solutions}$ $\gets \textit{buildSolutions}(\textit{Unifiers})$
\EndIf
\If{$\textit{Depth}_i < \textit{MaxDepth}$}
\State $\textit{Children}\gets\textit{narrowingStep}(\textit{Term}_i,\textit{Irred}_i)$
    \While{$\textit{Children} \neq \emptyset$}
        \State $\textit{Children}$ $\gets$ $\textit{Children} \setminus \{(\textit{Term}',\textit{Irred}\,',\textit{SMT}\,')\}$
        \State $\textit{Nodes}$ $\gets \textit{Nodes} \cup \textit{rename}(\textit{Term}',(\textit{Irred}_i;\textit{Irred}\,'),\textit{SMT}_i\wedge\textit{SMT}\,',\textit{Depth}_i+1)$
    \EndWhile
\EndIf
\EndWhile
\State $return(Solutions)$
\end{algorithmic}
\end{algorithm}

The SMT canonical narrowing algorithm has several parts, many of them iterative,
which is outlined in
Algorithm~\ref{alg:smt-can-narrowing} above, although some details have been omitted for simplification, e.g. flags that modify the behavior of SMT constraint checking are not taken into account (see Section \ref{subsec:extending-algorithm}). 

The most important inputs for the algorithm are the initial term, the target term, the initial irreducibility constraints, the initial SMT constraint, the maximum solution limit, and the maximum depth of the tree limit. When starting the algorithm, it is checked that the initial SMT constraint is satisfied. If so, the initial node is built with the received parameters and added to the node set. Otherwise, the set of nodes will be empty, so later the algorithm will terminate without performing any further operations, giving rise to an empty solution set. After building the first node, the solution set is initialized to empty. Then the main loop of the algorithm begins. In each iteration, it is checked that the set of nodes is not empty and that the maximum number of solutions has not been reached. If this is true, one of the nodes is removed from the set, and it is verified that its SMT constraint is true. If so, an attempt is made to unify the term of the node with the objective term, and the solutions are built with the unifying results. Those solutions are added to the solution set. Afterwards, it is checked that the maximum depth has not been reached. In that case, the children of the node are generated by taking narrowing steps. After this, all the variables that contain the child nodes are renamed. Each child node will have irreducibility constraints inherited from the parent and updated, as well as the SMT constraint of the parent, to which the new constraint (if any) is added with a conjunction. After exiting the loop, the solution set is returned. Note that for each call to unification or narrowing steps, the list of irreducible terms is used.

\subsection{Using the meta-level}
\label{subsec:using-meta}

To achieve the implementation of the new narrowing algorithm it is necessary to use some calls to the Maude meta-level available in Maude 3.3. Thanks to this, we can reuse functionalities that are integrated at the native level in C++, achieving much better performance than if we implemented them from scratch.

Each user command in Maude is represented by a corresponding command at the meta-level, allowing us greater control and management of their outputs. All meta-level commands use meta-level representations of terms. Let us briefly explain how Maude's meta-representation works with an example.

\begin{exa}\label{exa:meta-representation}
Given any term in Maude, its representation is achieved using a prefix notation in which each operator is at the head of the term, and each of the elements affected by it are enclosed in square brackets separated by commas. Each of the elements is preceded by the \texttt{'} symbol, used internally by Maude to declare identifiers. By using the operator \texttt{upTerm}, we can get the meta-representation of any term in Maude. Examples of this command are as follows

{\scriptsize
\begin{verbatim}
Maude> reduce in NARROWING-VENDING-MACHINE : upTerm(M:Marking) .
>      result Variable: 'M:Marking

Maude> reduce in NARROWING-VENDING-MACHINE : upTerm(< a c q M:Money >) .
>      result Term: '<_>['__['q.Coin,'c.Item,'a.Item,'M:Money]]
\end{verbatim}
}

\noindent Note that the meta-representation of an \texttt{M:Marking} variable is simply that same variable preceded by {'}, since the term contains no operators. However, when we introduce operators, as in the second call, the representation of the term in the meta-level is somewhat more complex, although it is achieved as explained above. The meta representation of the term \texttt{< a c q M:Money >} becomes 
\allowbreak
\texttt{`<\_>[`\_\_[`q.Coin,`c.Item,`a.Item,`M:Money]]}. The operators are simply nested and passed to the head of the term, grouping the arguments of each one of them with the square brackets.
\end{exa}

\noindent All meta-level examples below use the meta-representation of terms.

The \texttt{variant unify} command that we saw in Example \ref{exa:idem-vending-variants} corresponds to the \texttt{metaVariantUnify} command at the meta level~\cite[\S 17.6.10]{maude-manual}. It is precisely this command that we use to carry out the unification step in our implementation, since it allows us to perform equational unification modulo variant equations and axioms, as well as irreducibility constraints. The operator that defines the command is the following:

{\scriptsize
\begin{verbatim}
op metaVariantUnify :
    Module UnificationProblem TermList Qid VariantOptionSet Nat ~> UnificationPair? .
\end{verbatim}
}
The command receives six parameters and returns a structure of type \linebreak\texttt{UnificationPair?}, an error or a pair consisting of a substitution and an identifier of the family of variables used. The first command received is the module that defines the rewrite theory to work on. The second is the unification problem to which solutions are sought. The third is a list of irreducibility terms, which
are important for the canonical narrowing algorithm. 
The fourth corresponds to the identifier of the family of variables to avoid (the one used for the variables of the unification problem). The fifth is a parameter used to indicate if we want to filter the returned unifiers. Finally, a natural number parameter is received in which the unifier to be searched is indicated. We show an execution of this command in Example \ref{exa:metaVariantUnify-usage}, using in turn the module of the vending machine with idempotence
(see Example \ref{exa:idem-vending}).

\begin{exa}\label{exa:metaVariantUnify-usage}
Considering the module from Example \ref{exa:idem-vending},
we can use the \texttt{metaVariantUnify} command to find the unifiers seen in Example \ref{exa:idem-vending-variants}. We simply indicate the same equational unification problem, and by means of the last argument of the command we can select each of the unifiers to obtain. Additionally, we can use an irreducibility condition to reduce the number of unifiers just like we have seen before. For example, by using the same irreducibility condition, we can obtain one of the unifiers as follows:

{\scriptsize
\begin{verbatim}
Maude> reduce in META-LEVEL : 
>      metaVariantUnify(upModule('IDEMPOTENCE-VENDING-MACHINE, true),
>      '<_>['__['a.Item,'c.Item,'q.Coin,'M3:Money]] =? '<_>['__['$.Coin,'W3:Marking]],
>      '<_>['__['$.Coin,'M3:Money]], '@, none, 0) .
result UnificationPair: {
  'M3:Money <- '__['q.Coin,'q.Coin,'q.Coin,'#1:Money] ;
  'W3:Marking <- '__['a.Item,'c.Item,'#1:Money],'#}}
\end{verbatim}
}
\end{exa}

Another meta-level functionality that has been necessary to use is the \linebreak\texttt{metaNarrowingApply} command~\cite[\S 17.6.11]{maude-manual}. It performs a narrowing step, using the arguments shown in its definition below. Thanks to this command and the \allowbreak\texttt{metaVariantUnify} one, we can abstract from the unification processes, which are the most costly at the computational level. By invoking meta-level commands to do so, execution is done natively in C++ code, which turns out to be much faster and efficient than implementing it from scratch. The operator that defines the command is the following:

{\scriptsize
\begin{verbatim}
op metaNarrowingApply :
    Module Term TermList Qid VariantOptionSet Nat -> NarrowingApplyResult? .
\end{verbatim}
}
In this case, the command receives as the first parameter, again, the module that represents the rewrite theory to be used. The second parameter represents the term from which to perform the narrowing step. The third parameter is a list of irreducibility terms, important for canonical narrowing. The fourth parameter is the identifier of the family of variables to avoid. The fifth parameter is used to indicate if we want to filter the returned unifiers in order to get only the most general unifiers. Finally, the sixth parameter is the step that 
we want to take, that is, the ``branch" of the tree that
we want to generate from the given term. The result will be of sort \texttt{NarrowingApplyResult?}, a data structure that contains either an error, or the necessary information from the narrowing step performed.

\begin{exa}\label{exa:metaNarrowingApply-usage}
We use again the module from Example \ref{exa:idem-vending}. As an initial term we consider the 
meta-representation of (used later in the experiments) \texttt{M1:Money}. The \texttt{metaNarrowingApply} command allows us to give (among others) the first step of narrowing from that term:

{\scriptsize
\begin{verbatim}
Maude> reduce in META-LEVEL : 
>      metaNarrowingApply(upModule('IDEMPOTENCE-VENDING-MACHINE, true), 
>                   '<_>['M1:Money], empty, '@, none, 0) .
result NarrowingApplyResult: { '<_>['__['a.Item,'q.Coin,'%1:Money]],'State,
  [], 'buy-a, 'M1:Money <- '__['$.Coin,'%1:Money], 'M:Marking <- '%1:Money, '% }
\end{verbatim}
}
\noindent 
The output returned by Maude shows how the rule labeled as \texttt{buy-a} has been used to perform the narrowing step, resulting in two different assignments. On the one hand, a dollar is assigned together with a fresh variable to the variable \texttt{M1} of type \texttt{Money}. On the other hand, the same fresh variable is assigned to the variable \texttt{M2} of type \texttt{Marking} (note that in this case, this is possible only because \texttt{Money} is a subsort of \texttt{Marking}). This metalevel command also contains a parameter that, as in the Example~\ref{exa:metaVariantUnify-usage}, allows to indicate a list of irreducibility constraints. 
Consider now the situation where the second parameter is the irreducible constraint ``\texttt{M1 \$}":

{\scriptsize
\begin{verbatim}
Maude> reduce in META-LEVEL : 
>      metaNarrowingApply(upModule('IDEMPOTENCE-VENDING-MACHINE, true), 
>                   '<_>['M1:Money], '__['M1:Money,'$.Coin], '@, none, 0) .
result NarrowingApplyResult?: (failure).NarrowingApplyResult?
\end{verbatim}
}
\noindent
In this case, we see that the irreducibility constraint causes no solutions to be found. The command returns the constant \texttt{failure} to indicate this. This functionality is very important in our implementation, since it will allow us to make the calls to the narrowing steps indicating the lists of irreducibility restrictions that we accumulate at each moment.

\end{exa}

There is also a \texttt{metaNarrowingSearch} command that performs the entire narrowing algorithm instead of only one 
step. We have not used it since we need to perform intermediate operations between each narrowing 
step. Those operations handle the canonical narrowing algorithm and take SMT constraints into account.

For the SMT constraint satisfiability evaluation, there is a command in Maude that calls an external SMT solver, the \texttt{check}~\cite[\S 16.5]{maude-manual} command. Likewise, this command has its metalevel version, defined through the following operator:

{\scriptsize
\begin{verbatim}
op metaCheck : Module Term ~> Bool [special (...)] .
\end{verbatim}
}
\noindent
Both the \texttt{Integer} sort data and the \texttt{Real} sort data are supported. 
Note that 
Maude adds a new sort for Boolean SMT values, called \texttt{Boolean}, different from the 
sort for standard Boolean values, called \texttt{Bool}
(see \cite[\S 16.1]{maude-manual}).
Indeed, the response of the SMT solver will be of sort \texttt{Bool}. 

\begin{exa}
\label{exa:metaCheck}
If we consider the same module and problems used in Example \ref{exa:smt-maude}, we can try running them with the command in its meta-level version. This command will give us more information about the execution, and will return an output in a format that is much easier to handle when using it in our implementation:

{\scriptsize
\begin{verbatim}
Maude> reduce in META-CHECK : 
     > metaCheck(['SMT-EXAMPLE], 
     >      upTerm((I1 === I2 and I2 > I3 and I1 <= I3) or I3 =/== I4)) .
result Bool: (true).Bool

Maude> reduce in META-CHECK : 
     > metaCheck(['SMT-EXAMPLE], 
     >      upTerm((R1 === R2 or R2 === R3) and (R2 < R3) and (R1 <= R4) and (R2 > R4))) .
result Bool: (false).Bool
\end{verbatim}
} 

\noindent
In these outputs, a clear difference can be observed with respect to the user-level command, which returns a line in plain text as output, reporting whether the Yices2 SMT solver has determined that the problem is satisfiable or not. In this case, \texttt{Bool} values are obtained as outputs, which we can later handle 
as needed.
\end{exa}

\subsection{Data structures and the \texttt{narrowing} command}
\label{subsec:data-estruc-narrow}

All narrowing algorithms perform one-step transitions from one
symbolic state to another ---the narrowing steps--- using the rewrite rules of the given specification. We use a tree as a data structure, in which each of these narrowing steps gives rise to a new node, with its associated term. Thus, the root node of the tree will have as its associated term the initial term (reduced to normal form) indicated by the user. At the same time, each of the nodes is itself a data structure, in which we not only find the associated term, but also some extra information that allows us to locate the node and generate new terms from it.

Our implementation is built in such a way that ten parameters are requested from the user to invoke the command, as follows:

{\scriptsize
\begin{verbatim}
narrowing(Module, Term, SearchArrow, Term, Algorithm, VariantOptionSet, TermList, Qid, 
            Bound, Bound)
\end{verbatim}
}
The first argument receives the rewrite theory to perform the unification and narrowing steps. The second and fourth arguments are used to indicate the initial term and the target term respectively. The third argument corresponds to the search arrow (\texttt{=>1}, \texttt{=>+}, \texttt{=>!}, \texttt{=>*}) that we want to use, so that solutions are included or discarded depending on the rewriting steps performed to achieve them. 
This argument may take values to indicate that only solutions that involve a single rewrite step, one or more steps, or any number of steps can be considered. The combination of the fifth and sixth parameters will indicate the type of algorithm to use. Combinations indicating the use of standard narrowing or canonical narrowing are currently accepted. The seventh argument is used to indicate a list of initial irreducibility terms to consider. 
This argument will be taken into account in all 
calls to unification procedures and in each narrowing step, allowing the value \texttt{empty} to indicate that we do not want to use irreducibility constraints in the first step. The eighth argument receives the identifier used to name the variables in the initial and target terms, to avoid later clashes. Finally, the ninth and tenth arguments are used to impose bounds on the algorithm, being able to indicate a maximum depth to expand the search tree or a maximum of solutions to search.

\subsection{Search for Solutions}
\label{subsec:search-solutions}

When we receive the parameters from the user, the first necessary step is to verify that the value of the depth limits and solutions are admissible. If they are, the strategy to follow will be determined according to the indicated search arrow.

Once all the above is prepared, the first nodes of the search tree are generated from the root, that is, from the initial term. The tree will be generated by levels, so that children of any node belonging to the next level will not be generated until that level is completely generated. Each node contains its associated term plus some extra information. Specifically, for each node we need a unique identifier, a reference to its parent node, the branch of the tree to which it belongs, the depth to which it is located and, in the case of the use of canonical narrowing, a list of the irreducibility terms calculated so far in that branch.

Each time a new node is generated, an attempt is made to unify its associated term with the target term indicated by the user. If unifiers exist, a solution will be built for each of the unifiers found. To do this, it is necessary to go backwards through the branch to which the node belongs, combining the substitutions made to compute the accumulated substitution. If we are using canonical narrowing, when a new node is generated it will also be necessary to modify the list of irreducibility terms, adding the irreducibility term that is calculated from the irreducible left-hand side of the rule used to reach the node (see Definition \ref{def:canonical-narrowing}).

\subsection{Avoiding variable clashes}
\label{subsec:avoid-varclash}

For the generation of new nodes, some calls are made to internal commands of the Maude meta-level. These commands only allow the indication of a variable identifier to avoid (which must be the one used previously), preventing possible variable clashes. 
Maude always chooses between three variable identifiers (\texttt{\%}, \texttt{\#} or \texttt{\@}). In each of the calls made to its internal commands, it allows us to avoid the use of one (usually the one we use in our variables, if we use one of those). But Maude can still choose between the other two. This gives rise to the possibility that variables can be repeated in different nodes, which is not an a priori problem, but it cannot be assumed when it is required to calculate the cumulative substitution of a reachability solution.

To avoid this problem, we have chosen the strategy of renaming each of the fresh variables that Maude generates on the fly, using a new identifier, the \texttt{\$} symbol. That is why in the final result returned to the user, all the fresh variables that contain the narrowing solutions will be identified with that symbol, thus ensuring that none of them clashes with the rest.

\subsection{Algorithm performance improvement}
\label{subsec:algorithm-performance}

Due to the nature of the algorithm and the uses for which it is intended, performance of the algorithm plays a very important role. To improve this characteristic, different aspects have been taken into account regarding the sequence in which the algorithm acts and the data structures it handles.

Regarding the operators and equations in the code, they have been divided into three main parts, which correspond to the main steps of the algorithm at a theoretical level: (i) the generation of nodes (terms) in the search tree, (ii) the attempt to unify each new term with the target term, and (iii) the computation of solutions in case the unification with the target term is successful. Likewise, each of these parts is divided into subparts that facilitate not only the understanding of the code, but also a structured scheme to add new functionalities easily. Thanks to this, once we reimplemented the standard narrowing algorithm, it was relatively easy to add the new functionalities that modified it to achieve the canonical narrowing algorithm.

We can also consider the way in which the algorithm handles the data structures it works with. A priori, it could be thought that the nodes that are generated can go to a set of nodes that is subsequently processed. However, our strategy is to use an ordered list in which the nodes are processed taking into account an order similar to that of a recursion queue. In the same way, the nodes that are being processed in that list (that is, those in which the children have been generated) go to another list. This second list is used for the computation of the accumulated substitutions in the solutions. There is also another list in which the found unifiers are stored. It is also ordered to facilitate working with it recursively and calculating the solutions from the unifiers.

In addition to all this, extra parameters are dragged in the main data structure and also locally in each of the nodes. These parameters will later help to perform certain operations more quickly and efficiently. For example, each node has a reference to its parent node identifier, making it easy to go backwards on its branch if a cumulative substitution needs to be calculated.

\subsection{Using conditional rules in narrowing}
\label{subsec:conditional-rules}
To manage SMT constraints, our approach has been to use Maude's conditional rules to add them as a condition in each of the narrowing steps \footnote{\label{footnoteSMTCondition}Since Maude allows conditional rules with a condition being an SMT constraint only for rewriting, we are forced to encode our examples using a condition of the \texttt{SMTCondition = true}, where \texttt{SMTCondition} is the SMT constraint. This will  eventually be solved.}. The problem that arises is that the Maude narrowing mechanisms are not capable of processing the conditional rules. The way to fix this is to transform those conditional rules into normal rules, in which the new right-hand side of the new rules will contain both the right-hand side of the conditional rules and the SMT constraint. An operator \texttt{>>} should separate both parts, so that later the original term can be distinguished from the SMT constraint.

We have implemented a module that is responsible for carrying out the process of transformation of conditional rules. This module defines two operators: 

{\scriptsize
\begin{verbatim}
op transformMod : Module -> Module .
op transformRls : RuleSet -> RuleSet .
\end{verbatim}
}

The first receives a module, theory, module with strategy or theory with strategy. In either case, a new operator is added to the set of operators of the module or theory, which will be used to separate the terms from the SMT constraints in the transformed rules. It is also necessary to add the import of the Maude \texttt{META-TERM} module to the converted module, so that it is capable of processing the addition of this new operator. Finally, this operator calls the other defined operator, using as an argument the set of rules of the module to be transformed. For example, the equation used to transform a system module would be the following:

{\scriptsize
\begin{verbatim}
eq transformMod(mod ModId is Imports sorts Sorts . Subsorts Ops Membs Eqs Rls1 endm)
    = mod ModId is Imports (protecting 'META-TERM .)
      sorts Sorts . Subsorts 
      (Ops (op '_>>_ : 'Boolean 'State -> 'State [ctor poly (0 2)] .)) 
      Membs Eqs transformRls(Rls1) endm .
\end{verbatim}
}

The second operator, therefore, receives a set of rules, and is in charge of iterate through it looking for conditional rules. Each time a conditional rule is found, it is transformed into a new unconditional rule, in which the condition is added to the right-hand side using the \texttt{>>} operator defined above. The equations used to do this are as follows:

{\scriptsize
\begin{verbatim}
eq transformRls(Rls1 (crl Lhs => Rhs if (SMTCondition = 'true.Boolean) [Attrs].) Rls2)
    = transformRls(Rls1 Rls2) (rl Lhs => '_>>_[SMTCondition,Rhs] [Attrs narrowing] .) .
eq transformRls(Rls1) = Rls1 [owise] .
\end{verbatim}
}

If we have a conditional rule of the form \texttt{crl Lhs => Rhs if (SMTCondition = true) [Attrs]}, it will be automatically transformed into an unconditional rule of the form \texttt{rl Lhs => (SMTCondition >> Rhs) [Attrs narrowing]}, where \texttt{Lhs} and \texttt{Rhs} are variables of \texttt{Universal} type (that is, they can be instantiated as any sort), \texttt{SMTCondition} is a variable that represents the SMT constraint, and \texttt{true} is a \texttt{Boolean} value used only to be able to encode SMT constraints in the conditions of the rules (see Footnote~\ref{footnoteSMTCondition}). 
The new form of the rule after transforming it will allow us later to make the \texttt{>>} operator 
separate the term and the SMT constraint. This is explained in detail in the following section.

\subsection{Extension to handle SMT constraints}
\label{subsec:extending-algorithm}
Once we have prepared the module transformation to convert all the conditional rules into unconditional ones, we can extend the previous command so that it processes the SMT terms that will be generated with the new rules. This extension has been done without making changes at the user level, except for the addition of possible values to one of the existing arguments, as well as a new argument that allows for indicate an initial SMT constraint:

{\scriptsize
\begin{verbatim}
narrowing(Module, Term, SearchArrow, Term, AlgorithmOptionSet, VariantOptionSet, TermList, 
            Term, Qid, Bound, Bound)
\end{verbatim}
}

Until now, the fifth argument, of type \texttt{AlgorithmOptionSet}, only accepted the \texttt{standard} and \texttt{canonical} values, used to indicate the type of narrowing algorithm to use. Now, it also accepts combinations of those two values with the \texttt{smt}, \texttt{noCheck} and \texttt{finalCheck} values, although the second and third are limitations of the first, so they cannot appear without it. Therefore, there are now various combinations of accepted values for the parameter: (i) \texttt{standard}, (ii) \texttt{canonical}, (iii) \texttt{smt standard}, (iv) \texttt{smt canonical}, (v) \texttt{smt noCheck standard}, (vi) \texttt{smt noCheck canonical}, (vii) \texttt{smt finalCheck standard} and (viii) \texttt{smt finalCheck canonical}.

By using the \texttt{smt} value, the transformation of the conditional rules will be performed in the module used as rewrite theory for each conditional rule with an SMT constraint. The common approach is that the accumulated SMT constraint will be processed during the execution of the algorithm to check if it is satisfiable each time a new SMT constraint is added to the accumulated SMT constraint. By using \texttt{smt noCheck}, only the transformation of the rules will be carried out, ignoring the satisfiability of the accumulated SMT constraint. On the other hand, by using \texttt{smt finalCheck}, the accumulated SMT constraint will be carried along each narrowing sequence, but it will be evaluated for satisfiability only at the end, when the solutions with the target term are already computed, discarding those in which the SMT constraint is not satisfied.

The most relevant changes to the algorithm occur before trying to unify the term of a new generated node with the target term, since the satisfiability of the SMT constraint for that node will have to be checked first if a conditional rule was applied. The SMT constraint has no impact before that, so the narrowing steps can be done just as they were before. The difference is that the term that is generated in each step may contain an SMT restriction, which is exactly what is checked later. 

Furthermore, we need to modify the previously used data structures. Now the main structure must save the initial SMT constraint indicated by the user. It will also be necessary for each of the nodes to contain 
an SMT constraint (which can in turn be a conjunction of SMT constraints). We have stored that constraint at each node in a \texttt{\{Term, Boolean\}} pair, where the second value of the pair indicates the satisfiability of the constraint found in the first value. Two new operators are introduced in the algorithm that run after the generation of a new node and renaming of its variables, although they will only be used if the user indicates that SMT constraints must be processed:

{\scriptsize
\begin{verbatim}
op evaluateSMT : UserArguments TreeInfo SolutionList -> NarrowingInfo .
op checkSat : UserArguments TreeInfo SolutionList -> NarrowingInfo .
\end{verbatim}
}

The \texttt{evaluateSMT} operator performs the separation of the SMT constraints from the new term generated with one of the transformed rules. In turn, it joins these restrictions with the SMT conjunction of restrictions carried so far, which will come from the predecessor nodes to the current one and from the initial restrictions indicated by the user. Additionally, it evaluates all those restrictions, to know if they are satisfiable or not. To do this, we rely on Maude's SMT interface, which is available in the meta-level. Specifically, we use the \texttt{metaCheck} command (detailed above in Example \ref{exa:metaCheck}), which receives the module to use and the term to evaluate, returning a value of type \texttt{Bool}. If the result is \texttt{true}, the constraints are satisfiable. Otherwise, \texttt{false} is returned. Note that in case the user has indicated, in addition to the \texttt{smt} value as an argument, the \texttt{noCheck} or \texttt{finalCheck} value, the \texttt{evaluateSMT} operator will only separate the SMT constraint from the term, ignoring the rest of the process, since we are not interested in checking the satisfiability, although in the second case it will be necessary to evaluate the constraints later.

The \texttt{checkSat} operator is responsible for processing the result obtained when executing the \texttt{metaCheck} function. If the restrictions are satisfiable, the next execution step should be the attempt to unify the term of the node with the objective term, to check if it corresponds to one or more solutions of the reachability problem. If the constraints are not satisfiable, then it will not make sense to perform the unification step, since we will not consider the term of the node as valid. We therefore return to the step of generating new nodes, marking the current node as invalid, so that it is not taken into account later, since we do not want to generate the possible child nodes of this node either.

\subsection{Variable consistency}
\label{subsec:variable-consistency}
As we explained in Section \ref{subsec:avoid-varclash}, the way Maude generates the fresh variables may lead to clashes. For this reason, the fresh variables that are generated in each narrowing step must be renamed. Since the variables in an SMT constraint are related to the term of tree node, as well as to the variables in the previously accumulated SMT constraints, there is a consistency problem with this renaming. That is why in each narrowing step, we now have to apply variable substitutions to the accumulated SMT constraint so that there is no such loss of consistency. Specifically, at each narrowing step, the computed substitution that must be applied to the term of the previous node to take that step must be applied to the new node's SMT constraint. The substitution must also be applied to the different SMT constraints carried along the node branch. In turn, this sequence of SMT constrains will already come with the variables renamed in the previous steps, so 
variable coherence is met. Note that the initial SMT constraint indicated by the user will also have to be renamed. This is not a problem, since that constraint is also automatically added to the accumulated SMT constraint of the initial node, so it can be renamed at the same time as the rest.

\section{Experiments}
\label{sec:experiments}
We have performed several experiments using different rewrite theories. 
The reimplementation of both the standard narrowing and canonical narrowing in the same command presented in Section~\ref{sec:implementation} allows us to perform more faithful comparisons between the algorithms, independently of the standard built-in narrowing algorithm provided by Maude at the C++ level. However, since the built-in narrowing returns only one solution when executed via its meta-level function, we have also built a command that iteratively obtains all solutions. In summary, for the examples in this section, we actually include a comparison between (i) the standard built-in narrowing, (ii) our implementation of standard narrowing, and (iii) our implementation of canonical narrowing. 

Note that as we progress through this section, we present experiments of increasing complexity. 
Sections \ref{subsec:vending-machine}, \ref{subsec:xor-protocol}, \ref{subsec:proc-counter} and \ref{subsec:idempotence-vending-machine}
have no conditional rules but each one has an equational theory more complex than the previous one. 
Sections \ref{subsec:bank-account}, \ref{subsec:brands-chaum-time} and \ref{subsec:brands-chaum-time-space}
have conditional rules with SMT constraints on real arithmetic 
but each one has an equational theory more complex than the previous one. 

Let us emphasize that
the experiments of Section~\ref{subsec:brands-chaum-time-space} 
compare 
standard narrowing (without irreducibility constraints but with SMT constraints) and canonical narrowing
(both with irreducibility and SMT constraints) 
combining: 
(i) SMT constraints on non-linear real arithmetic, 
(ii) the exclusive-or theory,
and
(iii) the commitment theory between the participants.

All of the Maude modules and experiments are available at \url{https://github.com/ralorueda/wrla22-jlamp}.

\subsection{Vending machine}
\label{subsec:vending-machine}
In the first set of experiments we  use a module that is a classic in the Maude community. It is the coffee and apple vending machine, in which dollars and quarters are inserted to buy combinations of those products. To do this, we specify that each coffee costs one dollar and each apple three-quarters of a dollar. Two rules handle state transitions for those specifications. Furthermore, an equation is used to specify the change of four-quarters of a dollar to one dollar. Note the addition of a variable \texttt{M} of type \texttt{Marking} to make the rules and equations ACU-coherent \cite{DBLP:journals/tcs/Meseguer17}.

{\scriptsize
\begin{verbatim}
mod NARROWING-VENDING-MACHINE is
   sorts Coin Item Marking Money State .
   subsort Coin < Money .
   op empty : -> Money .
   op __ : Money Money -> Money [assoc comm id: empty] .
   subsort Money Item < Marking .
   op __ : Marking Marking -> Marking [assoc comm id: empty] .
   op <_> : Marking -> State .
   ops $ q : -> Coin .
   ops c a : -> Item .
   var M : Marking .
   rl [buy-c] : < M $ > => < M c > [narrowing] .
   rl [buy-a] : < M $ > => < M a q > [narrowing] .
   eq [change] : q q q q M = $ M [variant] .
endm
\end{verbatim}
}

We use the reachability problem $< M1 >\;\leadsto^{*}_{\alpha,R/E,B}\; St $ where \texttt{M1} is a variable of type \texttt{Money} and \texttt{St} is a variable of type \texttt{State}. That is, we are asking for all the states that can be reached from an initial state containing only \texttt{quarters} and \texttt{dollars}. It is a fairly generic problem that allows us to see the number of nodes that are being generated in the search tree. Table \ref{tab:experiments-vending} shows the results of running the command with this reachability problem.

\begin{table}[ht]
    \centering
    \setlength\doublerulesep{0.5pt}
    \caption{Experiments using the vending machine module.}
    \label{tab:experiments-vending}
    \begin{tabular}{ |c|c|c|c| } 
        \hline
        Algorithm & Depth limit & Execution time & Solutions found    \\
        \hhline{====}
        Native      & 4           & 24 ms             & 163           \\
        \hline
        Standard    & 4           & 56 ms             & 163           \\
        \hline
        Canonical   & 4           & 44 ms             & 137           \\
        \hhline{====}
        Native      & 5           & 88 ms             & 550           \\
        \hline
        Standard    & 5           & 436 ms            & 550           \\
        \hline
        Canonical   & 5           & 260 ms            & 119           \\
        \hhline{====}
        Native      & 6           & 380 ms            & 1850          \\
        \hline
        Standard    & 6           & 6056 ms           & 1850          \\
        \hline
        Canonical   & 6           & 2888 ms           & 1213          \\
        \hhline{====}
        Native      & 7           & 2364 ms           & 6216          \\
        \hline
        Standard    & 7           & 184536 ms         & 6216          \\
        \hline
        Canonical   & 7           & 50420 ms          & 3559          \\  
        \hline
    \end{tabular}
\end{table}

These initial experiments use a simple rewrite theory and a simple reachability problem. As a consequence, the narrowing included natively in Maude turns out to be faster than either of our two algorithms, thanks to its coding in C++. However, we can see that even in these cases, if we compare our standard narrowing implementation with our canonical narrowing implementation, the latter has always a better performance. This leads us to think that a natively programmed canonical narrowing would be able to outperform Maude's standard narrowing even using these simple parameters. To strengthen this idea, we can look at the number of solutions (which in this case represent the number of states in the tree) found. For example, for depth level 7, canonical narrowing is capable of reducing the number of states generated by almost half regarding standard narrowing. If it was implemented natively in Maude, its execution time would obviously be much less, since it has to go through far fewer rewriting steps. 

In addition, the reduction of solutions represents in itself a relevant 
improvement. This reduction of solutions is given by the elimination of redundant narrowing steps, that is narrowing steps whose computed substitution is not normalized and, by confluence and coherence between rules and equations, the normalized version of the computed substitution will correspond to a different narrowing step.

\subsection{A protocol using the exclusive-or property}
\label{subsec:xor-protocol}
For the second set of experiments we use a protocol using some strand space notation (see \cite{ESCOBAR2006162}), that includes the exclusive-or property as equational theory. This theory is shown below. Note the addition of the second equation for AC-coherence.

{\scriptsize
\begin{verbatim}
fmod EXCLUSIVE-OR is 
  sort XOR .  
  op mt : -> XOR .
  op _*_ : XOR XOR -> XOR [assoc comm] .
  vars X Y Z U V : [XOR] .
  eq [idem] :     X * X = mt    [variant] .
  eq [idem-Coh] : X * X * Z = Z [variant] .
  eq [id] :       X * mt = X    [variant] .
endfm
\end{verbatim}
}

\noindent
In the following XOR-PROTOCOL module, 
the equational theory above is imported and the rest of the protocol is implemented. The main structure is a state that stores the set of messages that have been sent and the new messages to be sent. The exchange of messages is done between two users for the protocol to take place. The \texttt{-} and \texttt{+} symbols are used as operators to distinguish between the messages to be received or sent respectively. The nonce generation is included in the protocol, as well as data structures that specify the knowledge that an intruder might have.

{\scriptsize
\begin{verbatim}
mod XOR-PROTOCOL is protecting EXCLUSIVE-OR .
  sorts Name Nonce Fresh Msg . subsort Name Nonce XOR < Msg . subsort Nonce < XOR .
  ops a b c : -> Name . op n : Name Fresh -> Nonce .
  op pk : Name Msg -> Msg . ops r1 r2 r3 : -> Fresh .
  sort SMsg . sort SMsgList . subsort SMsg < SMsgList .
  ops + - : Msg -> SMsg .
  op nil : -> SMsgList .
  op _`,_ : SMsgList SMsgList -> SMsgList [assoc] .

  sort Strand . sort StrandSet . subsort Strand < StrandSet .
  op `[_|_`] : SMsgList SMsgList -> Strand .
  op mt : -> StrandSet .
  op _&_ : StrandSet StrandSet -> StrandSet [assoc comm id: mt] .

  sort IntruderKnowledge .
  op mt : -> IntruderKnowledge .
  op inI : Msg -> IntruderKnowledge .
  op nI : Msg -> IntruderKnowledge .
  op _`,_ : IntruderKnowledge IntruderKnowledge -> IntruderKnowledge [assoc comm id: mt] .

  sort State .
  op Sta : -> State .
  op `{_`{_`}`} : StrandSet IntruderKnowledge -> State .

  vars IK IK1 IK2 : IntruderKnowledge .  vars A B : Name .  
  vars X Y Z U V : [XOR] .  vars SS SS1 SS2 : StrandSet .  
  var M : Msg .  vars L1 L2 : SMsgList .  vars NA NB : Nonce .

  rl [r1] : { (SS & [ ( L1 , -(M)) | L2 ])  { (inI(M) , IK) } } =>
            { (SS & [ L1 | (-(M) , L2) ])  { (inI(M) , IK) } } 
            [narrowing] .
  rl [r2] : { (SS & [ (L1 , +(M)) | L2 ])  { (inI(M) , IK) } } =>
            { (SS & [ L1 | (+(M) , L2) ])  { (nI(M) , IK) } } 
            [narrowing] .
endm
\end{verbatim}
}

We can define a reachability problem by using a basic message exchange between users. To do this, we consider a backwards execution, so that the target term will be the initial state of the message stack, while the initial term will be the final state. If a solution is found, it means that that execution trace exists, so it could occur in the protocol. The Maude-NPA \cite{DBLP:conf/fosad/EscobarMM07} tool works in a similar way to this. 

Considering \texttt{X} and \texttt{Y} as variables of type \texttt{Msg}, the reachability problem that we have defined for the experiments is the following:

{\scriptsize
$$\begin{array}{c}
\{[nil,\;+(pk(a, n(b, r1))),\;-(pk(b, Y)),\;+(Y * n(b, r1))\;|\;nil] \\\&\;[nil,\;-(pk(a, X)),\;+(pk(b, n(a,r2))),\;-(X * n(a, r2))\;|\;nil]\\\{inI(X * n(a, r2)),\;inI(pk(a, X)),\;inI(pk(b, Y))\}\} 
\\
\stackrel{?}{\leadsto}{\!}^{*}\\
\{[nil\;|\;+(pk(a, n(b, r1))),\;-(pk(b, Y)),\;+(Y * n(b, r1)),\;nil] \\\&\; [nil\;|\;-(pk(a, X)),\;+(pk(b, n(a, r2))),\;-(X * n(a, r2)),\;nil]\\\{nI(X * n(a, r2)),\;nI(pk(a,X)),\;nI(pk(b, Y))\}\}
\end{array}$$
}

\noindent
In this context we are working with a finite search space, so we can ignore the limit of solutions and the depth limit (although all the solutions are found in depth 6, so we could also use that depth limit). Results are shown in Table \ref{tab:experiments-xor}.

\begin{table}[ht]
    \centering
    \setlength\doublerulesep{0.5pt}
    \caption{Experiments using the XOR-protocol module.}
    \label{tab:experiments-xor}
    \begin{tabular}{ |c|c|c|c| } 
        \hline
        Algorithm  & Execution time    & Solutions found \\
        \hhline{====}
        Native     & 1492 ms           & 84              \\
        \hline
        Standard   & 14168 ms          & 84              \\
        \hline
        Canonical  & 2168 ms           & 1               \\
        \hline
    \end{tabular}
\end{table}

In this case, the native standard narrowing in Maude again 
is faster than our two algorithms, although the difference is less than before. The usual impact that canonical narrowing has on the number of returned solutions is further noticeable. As we mentioned before, thanks to carrying out a ``pruning" of the tree by discarding those reducible branches, canonical narrowing is able to reduce the 84 initial solutions to only 1. 

It is important to note here the usefulness of canonical narrowing in the field of security protocols, and specifically for
tools relying on unification and/or narrowing, such as the Maude-NPA tool \cite{DBLP:conf/fosad/EscobarMM07},
Tamarin \cite{DBLP:conf/cav/MeierSCB13} and AKISS \cite{DBLP:journals/tocl/ChadhaCCK16}.
By 
eliminating redundant narrowing steps, as seen in Example \ref{exa:idem-vending-variants}, when we analyze a protocol with canonical narrowing we achieve higher performance with less numerous reachable states but still complete results. 

We have already seen that canonical narrowing is useful even when ---due to the prototype nature of its present implementation--- it cannot be faster than the C++ based native standard narrowing in Maude. We have also concluded that if it were also integrated natively, it could be substantially faster and generate fewer states than standard narrowing in many cases. But if we also carry out experiments with systems in which there are many variants, these claims will be further reinforced, as shown in the next section.

\subsection{A simplified process counter}
\label{subsec:proc-counter}
For these experiments, we first implement a simple module defining the properties of an abelian group. That will be the equational theory used. This equational theory is complex for narrowing, due to the number of variants \cite{DBLP:journals/corr/abs-1909-08241, DBLP:journals/corr/abs-2009-11070}. That is why with these experiments we can see the real potential of canonical narrowing compared to standard narrowing.

{\scriptsize
\begin{verbatim}
fmod ABELIAN-GROUP is
   sort Int .
   ops 0 1 : -> Int [ctor] .
   op _+_ : Int Int -> Int [assoc comm prec 30] .
   op -_ : Int -> Int .
   vars X Y Z : Int .

   eq X + 0 = X [variant] .
   eq X + (- X) = 0 [variant] .
   eq X + (- X) + Y = Y [variant] .
   eq - (- X) = X [variant] .
   eq - 0 = 0 [variant] .
   eq (- X) + (- Y) = -(X + Y) [variant] .
   eq -(X + Y) + Y = - X [variant] .
   eq -(- X + Y) = X + (- Y) [variant] .
   eq (- X) + (- Y) + Z = -(X + Y) + Z [variant] .
   eq - (X + Y) + Y + Z = (- X) + Z [variant] .
endfm
\end{verbatim}
}

\noindent
A rewrite theory which consists of a pair of integers that function as process counter is defined (borrowed from \cite{DBLP:conf/birthday/AlpuenteBFS14}). The first integer represents the processes that are running, while the second represents those that have finished their execution. The transition rule represents the termination of a process that was running, so that the value of the first integer of the pair is decreased by one, and at the same time the value of the second integer is increased by one. The transition rule allows for narrowing, while the abelian group equations allow for the generation of variants. 
By combining the two theories, we get a system of transitions that, as mentioned above, turns out to be quite complex, due to the large number of variants that any term will normally have (see \cite{DBLP:journals/corr/abs-1909-08241,DBLP:journals/corr/abs-2009-11070}).

{\scriptsize
\begin{verbatim}
mod PROC-COUNTER is protecting ABELIAN-GROUP .
   sort State .
   op <_,_> : Int Int -> State [ctor] .
   vars X Y Z : Int .

   rl [finish-proc] : < (X + 1),Y > => < ((X + 1) + (- 1)),(Y + 1) > [narrowing] .
endm
\end{verbatim}
}

Considering \texttt{X} and \texttt{Y} 
as variables of type \texttt{Int}, we use a common initial term: $< 0,\,1+X >$. The target term will vary slightly allowing us to fix the depth at which we want to find the solution. For depth one, it will be $<-1,\,Y>$. For depth two, it will be $< -(1+1),\,Y >$. For depth three, it will be $< -(1+1+1),\,Y >$, and for depth four, it will be $< -(1+1+1+1),\,Y >$.

Apparently, the initial term and the target terms are very simple in this example, but the computation becomes very complex, since the search tree will grow very quickly in width. Table \ref{tab:abelian-group} shows the results of executing the above problems using different depth limits. 

\begin{table}[ht]
    \centering
    \setlength\doublerulesep{0.5pt}
    \caption{Experiments using the process counter module.}
    \label{tab:abelian-group}
    \begin{tabular}{ |c|c|c|c| } 
        \hline
        Algorithm   & Depth limit & Execution time    & Solutions found \\
        \hhline{====}
        Native      & 1           & 300 ms            & 184             \\
        \hline
        Standard    & 1           & 316 ms            & 184             \\
        \hline
        Canonical   & 1           & 312 ms            & 184             \\
        \hhline{====}
        Native      & 2           & $>$ 10 h           & $-$             \\
        \hline
        Standard    & 2           & $>$ 10 h           & $-$             \\
        \hline
        Canonical   & 2           & 1948 ms           & 719             \\
        \hhline{====}
        Native      & 3           & $>$ 10 h           & $-$             \\
        \hline
        Standard    & 3           & $>$ 10 h           & $-$             \\
        \hline
        Canonical   & 3           & 9192 ms          & 2033            \\
        \hhline{====}
        Native      & 4           & $>$ 10 h           & $-$             \\
        \hline
        Standard    & 4           & $>$ 10 h           & $-$             \\
        \hline
        Canonical   & 4           & 36760 ms          & 4969            \\
        \hline
    \end{tabular}
\end{table}

In this case, we can see how the execution time for the first level (i.e., first reachability problem) is very similar in any of the three algorithms. Furthermore, it is striking that the solutions returned are the same. This is normal, since canonical narrowing does not have any kind of impact on the first level, because it has not yet calculated (see Definition \ref{def:canonical-narrowing}) irreducibility constraints (unless we specify them as part of the initial call). However, we can see how from depth 2, our standard narrowing algorithm does not even manage to finish in a reasonable execution time. The built-in narrowing 
does not do it either. In contrast, the canonical narrowing algorithm does terminate, returning a large number of solutions in a relatively short time. The deeper we go into the tree, the more solutions are found. At the same time, the execution time grows, but it is still acceptable.

This is a clear example of the enormous improvement that canonical narrowing can bring over standard narrowing in many cases, even using the one found natively in Maude. Obviously, if we put canonical narrowing at the same level, that is, included in Maude natively, the performance difference would be extremely large in favor of canonical narrowing, especially in this type of cases.

\subsection{Idempotence vending machine}
\label{subsec:idempotence-vending-machine}
As we mentioned earlier in the introduction, idempotence is a very important property in computing, since, for example,
set data types enjoy it. Canonical narrowing seems to behave very well managing this property when compared to standard narrowing (even better than the experiments with an abelian group). We have done some experiments in which this property is used to corroborate this. We use the vending machine module with idempotence on items and dollars (see Example \ref{exa:idem-vending}). The reachability problem defined in this case is $< M1 >\;\leadsto^{*}_{\alpha,R/E,B}\; < \$\;a\;c\;M2 > $, where \texttt{M1} is a variable of type \texttt{Money} and \texttt{M2} is a variable of type \texttt{Marking}.

\begin{table}[ht]
    \centering
    \setlength\doublerulesep{0.5pt}
    \caption{Experiments using the vending machine with idempotence of Example~\ref{exa:idem-vending}}
    \label{tab:idempotence-vending}
    \begin{tabular}{ |c|c|c|c| } 
        \hline
        Algorithm   & Depth limit & Execution time    & Solutions found \\
        \hhline{====}
        Native      & 4           & 1908 ms           & 3804             \\
        \hline
        Standard    & 4           & 5792 ms            & 3804             \\
        \hline
        Canonical   & 4           & 608 ms            & 856              \\
        \hhline{====}
        Native      & 6           & 116144 ms          & 40284            \\
        \hline
        Standard    & 6           & 1185464 ms         & 40284            \\
        \hline
        Canonical   & 6           & 8888 ms          & 4284             \\
        \hhline{====}
        Native      & 8           & $>$ 10 h          & $-$              \\
        \hline
        Standard    & 8           & $>$ 10 h          & $-$              \\
        \hline
        Canonical   & 8           & 39772 ms         & 18963            \\
        \hline
    \end{tabular}
\end{table}

Table \ref{tab:idempotence-vending} shows the results obtained when using the reachability problem. We must bear in mind that in this case, once again, the growth of the tree in width is very large, due to the large number of variants of the system.

In this case we can see that the executions with a lower depth limit are extremely fast. However, even in those cases the difference is obvious, with the canonical narrowing being faster and returning fewer solutions. As we increase the depth limit, the difference in performance becomes more and more noticeable.

Just by looking at the experiments with depth limit 4, we can see that canonical narrowing achieves a performance at the computational level about 14 times better than Maude's native standard narrowing. And if we instead look at our Maude reimplementation of standard narrowing, for a fair comparison, the difference is huge. More than 15 minutes of execution is reduced to just 7 seconds. The number of solutions returned is also reduced to one tenth, showing that the large percentage of those calculated by standard narrowing were unnecessary. For depth limit 5, the execution times of the standard narrowing are no longer reasonable, while the canonical narrowing manages to finish in less than 4 minutes.

\subsection{Bank account with SMT constraints}
\label{subsec:bank-account}
As mentioned above, in addition to the canonical narrowing implementation, we have extended the algorithm to be able to handle conditional rewrite theories with SMT constraints. Many of them are theories that 
could not run with narrowing before. For example, \cite{DBLP:journals/jlap/Meseguer20} presents a module that models the basic behavior of a bank account. To do this, various operations with natural numbers are defined. The original module cannot be run with narrowing on Maude, as it contains conditional rules. In the article, a manual transformation is proposed to eliminate variants and conditional rules, so that the module can be used, among other things, to run with narrowing. But if we specify the conditional rule conditions as SMT constraints, using reals instead of naturals (which is even closer to the real bank account approach), our implementation is able to process the original specification. The transformed module is as follows:

{\scriptsize
\begin{verbatim}
mod BANK-ACCOUNT is 
    protecting REAL-INTEGER .
    protecting TRUTH-VALUE .

    sorts Account Msg MsgConf State StatePair .
    subsort  Msg < MsgConf .

    op < bal:_pend:_overdraft:_> : Real Real Bool -> Account [ctor] . 
    op mt : -> MsgConf [ctor] .
    op withdraw : Real -> Msg [ctor] .
    op _,_ : MsgConf MsgConf -> MsgConf [ctor assoc comm id: mt] . 
    op _#_ : Account MsgConf -> State [ctor] .  ***  state ctor

    vars n n' m x y x' y' : Real .    vars b b' : Bool .
    vars s s' : State . var msgs : MsgConf .

    rl [w-req] : < bal: n + m + x pend: x overdraft: false > # msgs => 
                    < bal: n + m + x pend: x + m overdraft: false > # withdraw(m),msgs 
                            [narrowing] .

    rl [dep] :  < bal: n pend: x overdraft: false > # msgs =>
                    < bal: n + m pend: x overdraft: false > # msgs [narrowing nonexec] .

    crl [w1] :  < bal: n pend: x overdraft: false > # withdraw(m),msgs => 
                    < bal: n pend: x overdraft: true > # msgs  
                        if (m > n) = true .

    crl [w2] :  < bal: n pend: x overdraft: false > # withdraw(m),msgs => 
                    < bal: (n - m) pend: (x - m) overdraft: false > # msgs  
                        if (m <= n) = true .
endm
\end{verbatim}
}

Now we can test its performance using different reachability problems, using our narrowing algorithm. For example, we can use a reachability problem in which all elements are variables. That is, the most generic possible reachability problem:

{\scriptsize
$$\begin{array}{c}
<\;bal{:}\;X{:}Real\;pend{:}\;Y{:}Real\;overdraft{:}\;B{:}Bool\;>\;\#\;M{:}MsgConf
\\
\stackrel{?}{\leadsto}{\!}^{*}\\
<\;bal{:}\;X'{:}Real\;pend{:}\;Y'{:}Real\;overdraft{:}\;B'{:}Bool\;>\;\#\;M'{:}MsgConf
\end{array}$$
}

By varying the maximum number of levels to display in the search tree, we obtain the results shown in Table 3.

\begin{table}[ht]
    \centering
    \setlength\doublerulesep{0.5pt}
    \caption{Experiments using the bank account module and SMT narrowing.}
    \label{tab:bank-account}
    \begin{tabular}{ |c|c|c|c| } 
        \hline
        Depth limit & Execution time    & Solutions found \\
        \hhline{====}
        3           & 20 ms             & 49             \\
        \hline
        4           & 88 ms             & 134             \\
        \hline
        5           & 460 ms            & 360             \\
        \hline
        6           & 4304 ms           & 976             \\
        \hline
        7           & 66724 ms          & 2664            \\
        \hline
    \end{tabular}
\end{table}

These results show the performance of narrowing for conditional modules with SMT constraints and no variant equations. It can be seen that for various depths of the search tree, the computation time is reasonable but still exponential (inevitable), even for the very generic reachability problem we use (for more instantiated problems, the performance should be even better). As expected, the computation time 
increases a lot along with the solutions going further down the tree. However, we will see later that for modules with variants, canonical narrowing will help alleviate this by reducing state explosion.

\subsection{Brands and Chaum with time}
\label{subsec:brands-chaum-time}
So far, on the one hand, we have shown examples where we compare the performance of standard narrowing and canonical narrowing. On the other hand, we have tested the performance of standard narrowing with conditional rules that include SMT constraints. But our implemented command goes further, allowing conditional rewrite theories with SMT constraints to be processed using both standard and canonical narrowing. Next, we show an example in which, using a conditional module with SMT constraints, we compare both narrowing algorithms.

We rely on the generic rewrite theory for protocol specification, inspired on the strand spaces~\cite{strands} used by Maude-NPA~\cite{DBLP:conf/fosad/EscobarMM07}, but with some modifications that adapt it to include SMT constraints on the real numbers, inspired on the constraints used in \cite{DBLP:conf/indocrypt/Aparicio-Sanchez20,DBLP:conf/birthday/Aparicio-Sanchez21}. It is a module that allows us to specify a state, made up of sets of strands and the intruder knowledge, which represents the communication channel. With it we can represent the protocols in a generic way, adding the corresponding equational theories for each of them. Later, when coding the narrowing calls, we will specify the exact strands of each protocol. 

In the original module (see Section~\ref{subsec:xor-protocol}), we had two transition rules. One of them (labeled as \texttt{send-msg}) processes the sent messages, and the other (labeled as \texttt{receive-msg}) the received messages:

{\scriptsize
\begin{verbatim}
var IK : IntruderKnowledge .   var SS : StrandSet .   var M : Msg .   vars L1 L2 : SMsgList .

rl [receive-msg] : { (SS & [ ( L1 , -(M)) | L2 ])  { (inI(M) , IK) } } =>
                    { (SS & [ L1 | (-(M) , L2) ])  { (inI(M) , IK) } } [narrowing] .
                    
rl [send-msg] : { (SS & [ (L1 , +(M)) | L2 ])  { (inI(M) , IK) } } =>
                    { (SS & [ L1 | (+(M) , L2) ])  { (nI(M) , IK) } } [narrowing] .
\end{verbatim}
}

\noindent
It can be seen in each of them how, for each set of strands, represented in square brackets, there is a list to the left of the operator \texttt{|} and one to the right. The first contains the messages to be processed, while the second contains the processed messages. At each transition, a message (sent or received) is taken from the end of the list of messages to be processed and moved to the list of processed messages. In the event that it is a sent message, the correspondence of that message will also be modified in the communication channel or intruder knowledge.

To adapt the module to protocols using non-linear arithmetic constraints on the real numbers via satisfiability, we add a conditional rule that is responsible for processing a new type of data that can appear in the strands sets: constraints. Specifically in our case, SMT constraints (type \texttt{Boolean}), which will be represented in the channel between the messages with the operator \texttt{\{\_\}}. We will therefore now have three rules. One of them is responsible for processing the messages sent, another the messages received, and another the restrictions that occur at any given time:

{\scriptsize
\begin{verbatim}
var IK : IntruderKnowledge .   var SS : StrandSet .   var SSR : StrandSetR .
var SSN : StrandSetN .   var M : Msg .   vars LeE2 : SMsgList-eE .   
var LREe1 : SMsgListR-Ee .

crl [check-contraint] : { (SSR & [ LREe1 , {B:Boolean} | LeE2 ])  { IK } } =>
                    { (SSR & [ LREe1 | {B:Boolean} , LeE2 ])  { IK } } 
                        if B:Boolean = true [nonexec] .
rl [receive-msg] : { (SSN & [ LREe1 , -(M) | LeE2 ])  { (inI(M) , IK) } } =>
                    { (SSN & [ LREe1 | -(M) , LeE2 ])  { (inI(M) , IK) } } [narrowing] .
rl [send-msg] : { (SS & [ LREe1 , +(M) | LeE2 ])  { (inI(M) , IK) } } =>
                    { (SS & [ LREe1 | +(M) , LeE2 ])  { (nI(M) , IK) } } [narrowing] .
\end{verbatim}
}

\noindent
Note that in this case we use variables from different sorts, \texttt{SMsgListR-Ee} and \texttt{SMsgList-eE}, rather than the ones we used in \cite{DBLP:conf/wrla/Lopez-Rueda22}. 
We were interested in replicating 
the partial order reduction technique of Maude-NPA that establishes priorities between actions \cite{DBLP:journals/iandc/EscobarMMS14}. 
First, the conditional rule 
\texttt{check-constraint} is attempted. If not applicable, the unconditional rule
\texttt{receive-msg} is attempted. If not applicable, the unconditional rule
\texttt{send-msg} is attempted. The result is the elimination of the indeterminism, obtaining a much faster narrowing execution thanks to the improvement in the protocol specification. 

The previous module allows us, in a generic way, to specify protocols that contain SMT constraints. 
Now we just need to add the specific equational theories of each protocol we want to model. In our case, the first protocol used is Brands and Cham with time \cite{DBLP:conf/indocrypt/Aparicio-Sanchez20}, which can be seen as a simplified version of the protocol seen in Example~\ref{exa:brands-and-chaum}, but does not take into account the coordinates of the messages. Two cryptographic primitives are combined: exclusive-or over a set of nonces and a commitment scheme. Exclusive-or is defined with the following properties:

{\scriptsize
\begin{verbatim}
sort NNSet .
subsorts Nonce Secret < NNSet .
  
op null : -> NNSet .
op _*_ : NNSet NNSet -> NNSet [assoc comm] .
vars X Y : [NNSet] .

eq [idem] :     X * X = null    [variant] .
eq [idem-Coh] : X * X * Y = Y   [variant] .
eq [id] :       X * null = X    [variant] .
\end{verbatim}
}

\noindent
The commitment scheme allows a participant to commit to a chosen hidden value at an early protocol stage and reveal it later. It is defined with the following properties:

{\scriptsize
\begin{verbatim}
op commit : Nonce Secret -> NTMsg .
op open : Nonce Secret NTMsg -> [Boolean] .
eq open(N1:Nonce,Sr:Secret,commit(N1:Nonce,Sr:Secret)) = true [variant] . 
\end{verbatim}
}

\noindent
The \texttt{open} function is defined only for the successful case. This implies the use of the kind \texttt{[Boolean]} rather than the sort \texttt{Boolean}. We also use additional operators for this protocol, which allow us to define signing, message concatenation, and the creation of nonces and secrets. 

{\scriptsize
\begin{verbatim}
sorts Msg NTMsg TMsg .
sorts Name Honest Intruder Fresh Secret Nonce .
subsorts NNSet < NTMsg < Msg .
subsorts Nonce Secret < NNSet .
subsort Name < Msg .
subsort Honest Intruder < Name .

ops a b : -> Honest .
op i : -> Intruder .
ops ra1 rb1 rb2 : -> Fresh .
op n : Name Fresh -> Nonce .
op s : Name Fresh -> Secret .
op sign : Name NTMsg -> NTMsg .
op _;_ : NTMsg  NTMsg  -> NTMsg [gather (e E)] .
\end{verbatim}
}

\noindent
Additionally, we add several operators that will allow us to add metadata to the messages. In them, the sending and receiving times of the messages will be saved, as well as the identifier of the sender and the receiver. 

{\scriptsize
\begin{verbatim}
sorts TimeInfo NameTime NameTimeSet .
subsort NameTime < NameTimeSet .
subsort TMsg < Msg .

op _@_ : NTMsg TimeInfo -> TMsg .
op _:_ : Name Real -> NameTime .
op mt : -> NameTimeSet .
op _#_ : NameTimeSet NameTimeSet -> NameTimeSet [assoc comm id: mt] .
op _->_ : NameTime NameTimeSet -> TimeInfo .
\end{verbatim}
}

\noindent
Note that times will be represented as real numbers, one of the data types handled by Maude's SMT interface.
The distance between two participants $A$ and $B$, informally written in the paper as $d(A,B)$ is represented by an SMT variable \texttt{dab:Real}; this 
follows the approach 
of
\cite{DBLP:conf/birthday/Aparicio-Sanchez21}
and
is a generalization of the presentation of the protocol 
in \cite{DBLP:conf/indocrypt/Aparicio-Sanchez20}.

The module defined with the previous sorts, operators and rules allows us to code the strands of the Brands and Chaum protocol of Example~\ref{exa:brands-and-chaum} only with time. This will be done in the call to the narrowing algorithm, with an initial state and a target state. In the initial state, the strand sets will contain a list of messages and constraints to be processed and a list of messages and constraints processed, which will be empty. In the target state, the lists will have been inverted, so that all the messages and restrictions to be processed become processed. Consider, for example, the strands of a prover and a verifier in a regular execution of the Brands and Chaum protocol with time. With our syntax, they would be specified in the initial state as follows:

{\scriptsize
\begin{verbatim}
--- Alice, verifier
([nilEe,
  -(Commit:NTMsg                                @ b : t1:Real -> a : t1':Real),
        {(t1':Real === t1:Real + dab:Real) and dab:Real > 0/1},
  +(n(a,ra1)                                    @ a : t2:Real -> b : t2':Real),
  -((n(a,ra1) * NB:Nonce)                       @ b : t3:Real -> a : t3':Real),
        {(t3':Real === t3:Real + dab:Real) and dab:Real > 0/1 and t3:Real >= t2':Real},
        {(t3':Real - t2':Real) <= (2/1 * dab:Real) and dab:Real > 0/1},
  -(SB:Secret                                   @ b : t4:Real -> a : t4':Real),
        {open(NB:Nonce,SB:Secret,Commit:NTMsg)},
        {(t4':Real === t4:Real + dab:Real) and dab:Real > 0/1 and t4:Real >= t3':Real},
  -(sign(b,(n(a,ra1) * NB:Nonce) ; n(a,ra1))    @ b : t5:Real -> a : t5':Real),
        {(t5':Real === t5:Real + dab:Real) and dab:Real > 0/1 and t5:Real >= t4':Real}
| nileE] 
& 
--- Bob, prover
[nilEe,
  +(commit(n(b,rb1),s(b,rb2))                   @ b : t1:Real -> a : t1':Real),
  -(NA:Nonce                                    @ a : t2:Real -> b : t2':Real),
        {(t2':Real === t2:Real + dab:Real) and dab:Real > 0/1 and t2:Real >= t1':Real},
  +((NA:Nonce * n(b,rb1))                       @ b : t3:Real -> a : t3':Real),
  +(s(b,rb2)                                    @ b : t4:Real -> a : t4':Real),
  +(sign(b,(NA:Nonce * n(b,rb1)) ; NA:Nonce)    @ b : t5:Real -> a : t5':Real) 
| nileE])
\end{verbatim}
}

\noindent
We can see how the prover, Bob, will first send a commit to the verifier. Afterwards, the verifier, Alice, sends her nonce to the prover. Subsequently, the prover will send the exclusive-or of his nonce with the received one, and then sends the secret. The verifier will open it to confirm everything is okay. Finally, the prover will send the signed messages. An \texttt{@} operator appears in each message, after which the sending and receiving times of the message are saved, as well as the identifier of the sender and receiver. We can also see how SMT constraints are introduced after each received message. In them, conditions to be met are specified regarding the delivery and reception times. Conditions to satisfy relative to distances are also specified. For example, in the SMT constraint that is introduced on the strands of the prover, it is specified that the arrival time of the received message must be equal to its departure time plus the distance between the prover and the verifier. It is also specified that this distance must be greater than zero, and that the sending time of the message must be equal to or greater than the time in which the previous message was received.

Using this syntax and coding methodology, we have defined four experiments in which we test a regular execution of the protocol, a mafia-like attack pattern (both unconstrained and constrained), and a hijacking-like attack pattern.
In a regular execution, we get a solution, which is expected, since if the protocol is well defined, this execution should be possible. In the case of the mafia attack, a priori, a solution is also found, which translates into a possible vulnerability. However, after adding the triangle inequality $(d(a,i) + d(b,i)) > d$ as the initial constraint together with the constraint $d(V,P) > d > 0$, no solution is found. This is because, as part of the protocol definition (see Example \ref{exa:brands-and-chaum}), in this execution it is necessary that $2 \ast d(V, I) + 2 \ast d(P, I) \leq 2 \ast d$. As mentioned in Section~\ref{sec:implementation}, the initial SMT constraints can be written in one of the arguments of the narrowing command. However, it is possible to perform a hijacking attack, and that is why by specifying this pattern in one of the experiments, a solution is found. The attack occurs when an intruder located outside the neighborhood of the verifier (i.e., $d(V, I) > d$) succeeds in convincing the verifier that he is inside the neighborhood by exploiting the presence of an honest prover in the neighborhood (i.e., $d(V,P) \leq d$).

Furthermore, we have performed the two experiments using both standard narrowing with SMT constraints and canonical narrowing with SMT constraints. In this way, we can make a performance comparison when it comes to finding (or not) the mentioned attacks and executions. Table \ref{tab:brands-and-chaum-time} shows the results obtained. We do not specify the indicated depth limit for the search tree since we have not limited that parameter, just because we now that the search tree is finite in all these cases.

\begin{table}[ht]
    \centering
    \setlength\doublerulesep{0.5pt}
    \caption{Experiments using the Brands and Chaum with time module and standard/canonical narrowing with SMT constraints.}
    \label{tab:brands-and-chaum-time}
    \begin{tabular}{ |c|c|c|c| } 
        \hline
        Reachability problem & Algorithm & Execution time & Solutions found \\
        \hhline{====}
        Regular execution             & standard          & 900 ms      & 1    \\
        \hline
        Regular execution             & canonical         & 960 ms      & 1    \\
        \hline
        Unconstrained mafia attack    & standard        & 10108 ms     & 4    \\
        \hline
        Unconstrained mafia attack    & canonical       & 8604 ms     & 4    \\
        \hline
        Constrained mafia attack      & standard        & 8828 ms     & 0    \\
        \hline
        Constrained mafia attack      & canonical       & 5644 ms     & 0    \\
        \hline
        Hijacking attack              & standard        & 47084 ms    & 8    \\
        \hline
        Hijacking attack              & canonical       & 10332 ms    & 8    \\
        \hline
    \end{tabular}
\end{table}

In the results of the experiments, the performance improvement of the canonical narrowing with respect to the standard narrowing can be clearly observed, once again. Although the number of solutions is the same using both algorithms, in general, a significant improvement can be seen in the execution times. For the regular execution, which is relatively simple, we get very similar execution times, since the search space is not large enough for canonical narrowing to have an impact. However, when testing executions of attacks, which are represented with more complex reachability problems, canonical narrowing times improve up to 5 times those of the standard narrowing. The improvement is given both for the hijacking attack, which returns the mentioned result, as well as for the mafia attack in its two versions. The first (unconstrained) represents an unrealistic attack in which the initial restrictions that occur in the mafia attack are not taken into account. The second does take these restrictions into account, which are easily specified in a parameter of our command.

\subsection{Brands and Chaum with time and space}
\label{subsec:brands-chaum-time-space}
An extension of the above protocol in which space is taken into account in addition to time is possible: Brands and Chaum with time and space, detailed at a theoretical level in Example~\ref{exa:brands-and-chaum-2}. In this case, the coordinates related to the sending and receiving of each message appear in the metadata of the messages and in the restrictions, that is, the coordinates of the participants. To be able to write this, a slight modification of the previous protocol specification (see Section~\ref{subsec:brands-chaum-time}) is enough, as well as the addition of a new operator:

{\scriptsize
\begin{verbatim}
sort CoordNameTime .
op _:_,_,_,_ : Name Real Real Real Real -> CoordNameTime .
op _->_ : CoordNameTime NameTimeSet -> TimeInfo .
\end{verbatim}
}

\noindent
In this section,
the distance between two participants $A$ and $B$, informally written in the paper as $d(A,B)$ is also represented by an SMT variable \texttt{dab:Real}.

Once the modification is done, it is possible to encode the new strands. For example, the strands for a verifier and a prover in a regular execution of the protocol would now be as follows:

{\scriptsize
\begin{verbatim}
--- Alice, verifier
[nilEe, 
  -(Commit:NTMsg                    
        @ b : x1:Real,y1:Real,z1:Real,t1:Real -> a : t2:Real), 
            {(t2:Real === t1:Real + dab1:Real) and (dab1:Real > 0/1) and
            ((dab1:Real * dab1:Real) === (((x1:Real - ax:Real) * (x1:Real - ax:Real)) + 
            ((y1:Real - ay:Real) * (y1:Real - ay:Real))) + 
            ((z1:Real - az:Real) * (z1:Real - az:Real)))},      
  +(n(a,ra1)
        @ a : ax:Real,ay:Real,az:Real,t2:Real -> b : t3:Real),
  -((n(a,ra1) * NB:Nonce)
        @ b : x3:Real,y3:Real,z3:Real,t3:Real -> a : t4:Real), 
            {(t4:Real === t3:Real + dab3:Real) and (dab3:Real > 0/1) and
            ((dab3:Real * dab3:Real) === (((x3:Real - ax:Real) * (x3:Real - ax:Real)) + 
            ((y3:Real - ay:Real) * (y3:Real - ay:Real))) + 
            ((z3:Real - az:Real) * (z3:Real - az:Real)))},      
            {((t4:Real - t2:Real) <= (2/1 * d:Real)) and (d:Real > 0/1)},
  -(SB:Secret                                   
        @ b : x4:Real,y4:Real,z4:Real,t5:Real -> a : t6:Real),
            {open(NB:Nonce,SB:Secret,Commit:NTMsg)},
            {(t6:Real === t5:Real + dab4:Real) and (dab4:Real > 0/1) and 
            ((dab4:Real * dab4:Real) === (((x4:Real - ax:Real) * (x4:Real - ax:Real)) + 
            ((y4:Real - ay:Real) * (y4:Real - ay:Real))) + 
            ((z4:Real - az:Real) * (z4:Real - az:Real)))},  
  -(sign(b,(n(a,ra1) * NB:Nonce) ; n(a,ra1))    
        @ b : x5:Real,y5:Real,z5:Real,t7:Real -> a : t8:Real), 
            {(t8:Real ===  t7:Real + dab5:Real) and (dab5:Real > 0/1) and
            ((dab5:Real * dab5:Real) === (((x5:Real - ax:Real) * (x5:Real - ax:Real)) + 
            ((y5:Real - ay:Real) * (y5:Real - ay:Real))) + 
            ((z5:Real - az:Real) * (z5:Real - az:Real)))} 
| nileE]
&
--- Bob, prover
[nilEe,
  +(commit(n(b,rb1),s(b,rb2))                           
        @ b : bx:Real,by:Real,bz:Real,t1:Real -> a : t2:Real),
  -(NA:Nonce                                            
        @ a : x2:Real,y2:Real,z2:Real,t2:Real -> b : t3:Real),
            {(t3:Real === t2:Real + dab2:Real) and (dab2:Real > 0/1) and
            ((dab2:Real * dab2:Real) === (((x2:Real - bx:Real) * (x2:Real - bx:Real)) + 
            ((y2:Real - by:Real) * (y2:Real - by:Real))) + 
            ((z2:Real - bz:Real) * (z2:Real - bz:Real)))},
  +((NA:Nonce * n(b,rb1))                               
        @ b : bx:Real,by:Real,bz:Real,t3:Real -> a : t4:Real),
  +(s(b,rb2)                                            
        @ b : bx:Real,by:Real,bz:Real,t3:Real -> a : t6:Real),
  +(sign(b,(NA:Nonce * n(b,rb1)) ; NA:Nonce)            
        @ b : bx:Real,by:Real,bz:Real,t3:Real -> a : t8:Real)                        
| nileE]
\end{verbatim}
}

\noindent
The exchange of messages is very similar to what we have seen before, but in this case the metadata is somewhat more complex, since the sending coordinates are attached to each sending time. In addition, the restrictions are more complex, since in this case it will also be necessary to verify that the conditions required for those coordinates are satisfied at each moment. In fact, since the new constraints are non-linear arithmetic, Maude's SMT is not capable of processing them. In order to correctly execute the traces related to this protocol, we have used a version of Maude called Maude-NRA, which provides an SMT solver (e.g., Yices2~\cite{10.1007/978-3-319-08867-9_49}) that is capable of processing this type of arithmetic constraints.

We have performed experiments for this protocol with a regular execution, a mafia-like attack pattern, and a hijacking-like attack pattern. The results are similar to the previous ones, although more complex. Regular execution returns a solution, since it is possible to do it without problems. The hijacking attack is again possible as well, so a solution is again returned. Regarding the mafia attack, the same thing happens: a priori it is possible, but by adding the initial SMT constraints necessary for the trace to be consistent, the attack is impossible (see Example~\ref{exa:brands-and-chaum}). These restrictions are the same as before, but in this case some relative to coordinates are also added. In Table \ref{tab:brands-and-chaum-time-space} we show the results of these runs using the standard narrowing and canonical narrowing algorithms.

\begin{table}[ht]
    \centering
    \setlength\doublerulesep{0.5pt}
    \caption{Experiments using the Brands and Chaum with time and space module and standard/canonical narrowing with SMT constraints.}
    \label{tab:brands-and-chaum-time-space}
    \begin{tabular}{ |c|c|c|c| } 
        \hline
        Reachability problem & Algorithm & Execution time & Solutions found \\
        \hhline{====}
        Regular execution               & standard        & 1624 ms     & 1    \\
        \hline
        Regular execution               & canonical       & 2428 ms     & 1    \\
        \hline
        Unconstrained mafia attack    & standard        & 17044 ms    & 4    \\
        \hline
        Unconstrained mafia attack    & canonical       & 16364 ms    & 4    \\
        \hline
        Constrained mafia attack      & standard        & 15412 ms    & 0    \\
        \hline
        Constrained mafia attack      & canonical       & 11100 ms    & 0    \\
        \hline
        Hijacking attack              & standard        & 42276 ms    & 8    \\
        \hline
        Hijacking attack              & canonical       & 11760 ms    & 8    \\
        \hline
    \end{tabular}
\end{table}

Note that in the extended version with space, the complexity increases, so the execution times increase with respect to Table \ref{tab:brands-and-chaum-time}. As in Brands and Chaum with time (see Section \ref{subsec:brands-chaum-time}), disregarding space, regular execution yields very similar runtime results for both algorithms. This is because its specification is done with a very simple reachability problem. But if we pay attention to the rest of the experiments, in which the reachability problems are more complex, we again see a significant improvement of the canonical narrowing with respect to the standard narrowing. In this case, the execution times are higher for both algorithms, since by taking into account the space, the complexity increases when generating the search space and computing the solutions. However, we see an improvement in execution time of almost 4 times when using the canonical narrowing algorithm.

Both with this example and with the previous one, we check how our algorithm is capable of handling SMT constraints in a correct and efficient way, allowing us to specify protocols that contain conditional rules. At the same time, we show the positive impact that the use of canonical narrowing has in the world of cryptographic protocols.

\section{Conclusions and future work}
\label{sec:conclusions}
We have presented the extended version of two articles already published \cite{DBLP:conf/wrla/Lopez-Rueda22,DBLP:conf/wrla/Lopez-Rueda22-invited}. We have unified the work of both in an improved algorithm, using new examples and experiments that demonstrate its potential. Rewrite theories of different complexities have been used, evidencing the correct operation of narrowing with irreducibility constraints and SMT. In addition, new experiments have been defined, which together with the previous ones (which have been re-executed), demonstrate the superiority of canonical narrowing in terms of performance compared to standard narrowing in most cases.

The contributions of this work can be very useful in all those protocol analysis tools that rely on Maude unification or narrowing, such as Maude-NPA. In addition, they can be fitted in other areas that can use narrowing to perform symbolic analysis. Some of them might be logical model checking verification of concurrent systems, theorem proving or partial evaluation. 

As future work, we intend to integrate into the algorithm the ability to handle conditional rewrite theories at a more generic level. Likewise, a graphical interface can be assembled that brings together all the possible calls to the algorithm, making it easier for the user to use and read the results. Finally, it would be interesting to 
combine canonical narrowing with SMT constraints with the computation of most general variant unifiers \cite{DBLP:journals/corr/abs-1909-08241} when performing unification. The ultimate goal is to achieve a unified algorithm that allows users to configure different narrowing variants to use the one that best suits each case, with optimal performance and clear results.



\bibliographystyle{elsarticle-num}
\bibliography{ref.bib}

\end{document}